\def\dfn#1{{\boldmath{\textbf{#1}}}}
\newcommand{\abs}[1]{\left|#1\right|}
\DeclareMathOperator{\dist}{d}
\DeclareMathOperator{\supp}{supp}
\DeclareMathOperator{\wt}{wt}
\newcommand{\F}{\mathbb{F}}
\newcommand{\K}{\mathbb{K}}
\newcommand{\N}{\mathbb{N}}
\newcommand{\R}{\mathbb{R}}
\newcommand{\CurA}{\mathcal{A}}
\newcommand{\CurB}{\mathcal{B}}
\newcommand{\CurC}{\mathcal{C}}
\newcommand{\CurD}{\mathcal{D}}
\newcommand{\CurP}{\mathcal{P}}
\newcommand{\CurR}{\mathcal{R}}
\newcommand{\Cpro}{\otimes}
\theoremstyle{plain}%
\newtheorem{theorem}{Theorem}
\newtheorem{corollary}[theorem]{Corollary}%
\newtheorem{lemma}[theorem]{Lemma}%
\newtheorem{proposition}[theorem]{Proposition}%
\theoremstyle{remark}%
\newtheorem{example}[theorem]{Example}%
\newtheorem{remark}[theorem]{Remark}%
\theoremstyle{definition}%
\newtheorem{definition}[theorem]{Definition}%
\begin{document}

\title{q-ary Sequential Locally Recoverable Codes from the Product Construction}

\author[1,2]{Akram Baghban \href{mailto:a.baghban@sci.basu.ac.ir}{a.baghban@sci.basu.ac.ir}}
\author[2]{Marc Newman \href{mailto:marc.newman@unisg.ch}{marc.newman@unisg.ch}}
\author[2]{Anna-Lena Horlemann \href{mailto:anna-lena.horlemann@unisg.ch}{anna-lena.horlemann@unisg.ch}}
\author[1]{Mehdi Ghiyasvand \href{mailto:mghiyasvand@basu.ac.ir}{mghiyasvand@basu.ac.ir}}

\affil[1]{%
	Department of Mathematics,
	Bu-Ali Sina University,
	Mostafa Ahmadi Roshan, 
	Hamedan, 6516738695, Iran
}

\affil[2]{%
	Institute of Computer Science,
	University of St.Gallen,
	Torstrasse 25, 9000, St.\ Gallen, Switzerland
}

\date{}

\maketitle{}

\begin{abstract}
	\noindent{}This work focuses on sequential locally recoverable codes (SLRCs), a special family of locally repairable codes, capable of correcting multiple code symbol erasures, which are commonly used for distributed storage systems. 
	First, we construct an extended $q$-ary family of non-binary SLRCs using product codes with a novel maximum number of recoverable erasures $t$ and a minimal repair alternativity $A$. 
	Second, we study how MDS and BCH codes can be used to construct $q$-ary SLRCs.
	Finally, we compare our codes to other LRCs.

	\vspace{\baselineskip}

	\noindent{}
	\textbf{Keywords:} Distributed storage, Locally repairable codes, Erasure correction,  Sequential recovery
\end{abstract}


\section{Introduction}

In a distributed storage system (DSS) data is stored in a large, distributed network of nodes or storage units.
To protect the system against node failure, current DSS systems utilize different coding schemes---specifically locally recoverable codes (LRCs)---which have recently been the subject of study.
The purpose of an LRC is to reduce the total number of required nodes employed for node repair.
Generally, an LRC with locality $r$ is a linear code over a finite field where the value of every code symbol is recoverable by accessing at most $r$ other symbols.

In such systems, it is common for multiple storage nodes to fail simultaneously. To address this, two primary approaches have been developed for locally recoverable codes (LRCs) capable of handling multiple erasures: parallel recovery and sequential recovery. In the parallel approach, each erased code symbol must be recovered independently using only the symbols that remain uncorrupted. In contrast, the sequential approach allows the erased symbols to be ordered and recovered one at a time, where each symbol can be reconstructed using both the initially available symbols and those recovered earlier in the sequence.

This sequential strategy offers greater flexibility than its parallel counterpart, as the choice of recovery sets can adapt dynamically throughout the process. As a result, sequential recovery can potentially handle a larger number of erasures.
Nevertheless, codes that support parallel recovery are often favored in practice due to their ability to reconstruct multiple erased symbols simultaneously, thereby reducing overall recovery latency and enhancing performance in scenarios with high read demand.

The interested reader is referred to the following works on locally recoverable codes. 
In~\cite{GHH12} Gopalan et al.\ presents the interesting notion of locally recoverable codes (see also~\cite{OD11} and~\cite{PD14}) in which a code symbol that has been erased is able to be recovered using a small set of other code symbols.
The authors of~\cite{GHH12} discuss codes with local single erasure recovery possibilities (see~\cite{BKVRSK18},~\cite{HCL13},~\cite{KPLK14}, and~\cite{TBF16}).

The sequential recovery developed by Prakash et al.~\cite{PLBK19} includes several techniques for locally recovering multiple erasures (see also~\cite{BPK16-1} and~\cite{RMV15}).
The authors in~\cite{PLBK19} discuss codes which can recover sequentially following two-erasures (see~\cite{SCYCH18}).
Codes which sequentially recover three-erasures are covered in~\cite{BPK16-1},~\cite{SCYCH18}, and~\cite{SY15}.
Alternate methods related to local recovery from multiple erasures are presented in~\cite{BKVRSK18},~\cite{BKK19}, and~\cite{BPK16-2}.

This research introduces a $q$-ary family of non-binary SLRCs which are capable of recovering $t \geq 2$ multiple erasures having small locality $r$.
With our structure we establish new upper bounds on the total number of recoverable erasures and code rate which significantly improve the results from~\cite{BPK16-1},~\cite{SCYCH18}, and~\cite{WZL15} using product codes.
Additionally, from the extension of SLRC constructions in~\cite{BPK16-1} in the case of binary codes, weaknesses in specific conditions on $k$ and $t$ are eliminated and a broader range of code parameters are developed. 

The paper is structured as follows: in Section 2 we provide necessary preliminaries, in Section 3 we provide our general $q$-ary construction for SLRCs, in Section 4 we provide explicit constructions using products of MDS and BCH codes, and in Section 5 we compare our constructions to other known constructions of general LRCs.


\section{Preliminaries}\label{sec:prelim}

We fix some notation that will be used throughout.
For any subset $A$, $\abs{A}$ is the size of $A$.
A set $B$ is referred to as a $t$-subset of $A$ if $B \subseteq A$ and $\abs{B} = t$.
For any positive integer $n$, we denote by $[n]$ the set $\{1, 2, \ldots, n\}$. The set $\F_q$ is the finite field with $q$ elements. Furthermore, let $\supp(x)$ denote the support of a vector $x$.

\subsection{Linear Codes and Local Recovery}

\begin{definition}
	A $q$-ary \textbf{linear code} of length $n$ and dimension $k$, denoted as an ${[n, k]}_q$-code, is a dimension $k$ linear subspace $\CurC$ of the vector space $\F_q^n$.
    The elements of a code are referred to as its codewords.

	The \textbf{rate} of an ${[n, k]}_q$-linear code is defined to be $\frac{k}{n}$.
\end{definition}

Let $\CurC$ be an ${[n, k]}_q$-linear code and $S$ be a $k$-subset of $[n]$.
Then the elements of $S$ denote a \textbf{set of information symbols} of $\CurC$ if there exist $a_{ij}\in \F_q$, for $i\in [n], j\in S$, such that for all codewords $x = (x_1, x_2, \ldots, x_n) \in \CurC$, $x_i = \sum_{j \in S} {a_{ij} x_{j}}$.

\begin{definition}
	The \textbf{Hamming weight} of a vector $x$, denoted by $\wt(x)$, is defined as the number of non-zero entries of $x$, i.e.,
	\begin{align*}
		\wt(x) = \abs{\supp(x)}.
	\end{align*}
    The \textbf{Hamming distance} between two vectors $x$ and $y$ is the Hamming weight of their difference, i.e., $\dist(x, y) = \wt(x - y)$.
\end{definition}
\begin{definition}
	The \textbf{minimum distance} of a code $\CurC$, denoted by $\dist(\CurC)$, is defined as the minimal distance between any two distinct codewords in the code, i.e.,
    $$\dist(\CurC) = \min\{d(x,y) \mid x,y \in \CurC, x\neq y\} .$$
    In a linear code, this is equivalent to the minimal Hamming weight of any nonzero codeword in $\CurC$.
\end{definition}

\begin{lemma}\cite{MS77}\label{lemma:linind1}
	Let $\CurC$ be an ${[n, k]}_q$-linear code with generator matrix $G$.
	Then any $\dist(\CurC^\perp) - 1$ columns of $G$ are linearly independent.
\end{lemma}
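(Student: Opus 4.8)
The plan is to exploit the duality between $\CurC$ and $\CurC^\perp$. Since the rows of $G$ span $\CurC$, a vector $y \in \F_q^n$ lies in $\CurC^\perp$ precisely when it is orthogonal to every row of $G$, that is, when $G y^\top = 0$. Thus $G$ plays the role of a parity-check matrix for $\CurC^\perp$, and the statement reduces to the familiar fact that the minimum distance of a code equals the least number of linearly dependent columns of any of its parity-check matrices. I would establish the direction we need directly, by contraposition.

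Concretely, set $d = \dist(\CurC^\perp)$ and suppose, for contradiction, that some $d-1$ columns of $G$, indexed by a set $T \subseteq [n]$ with $\abs{T} = d-1$, were linearly dependent. Then there exist scalars $\lambda_j \in \F_q$ for $j \in T$, not all zero, with $\sum_{j \in T} \lambda_j g_j = 0$, where $g_j$ denotes the $j$-th column of $G$. First I would assemble these scalars into a vector $y \in \F_q^n$ by setting $y_j = \lambda_j$ for $j \in T$ and $y_j = 0$ otherwise. The defining relation then reads $G y^\top = \sum_{j \in T} \lambda_j g_j = 0$, so $y \in \CurC^\perp$ by the duality observation above. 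Since the $\lambda_j$ are not all zero, $y$ is a nonzero codeword, yet $\wt(y) = \abs{\supp(y)} \le \abs{T} = d - 1 < d$, contradicting the fact that every nonzero codeword of $\CurC^\perp$ has weight at least $\dist(\CurC^\perp)$. Hence no $d-1$ columns of $G$ can be dependent, which is exactly the claim.

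There is no substantial obstacle here beyond bookkeeping: the only point requiring care is the identification of $G$ as a parity-check matrix of $\CurC^\perp$ and the correct orientation of the argument, namely that a dependency among few columns manufactures a low-weight dual codeword rather than the converse. Once that translation is in place, the contradiction with the minimum distance of $\CurC^\perp$ is immediate.
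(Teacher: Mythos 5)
Your proof is correct and complete: identifying $G$ as a parity-check matrix for $\CurC^\perp$ and turning a dependency among $d(\CurC^\perp)-1$ columns into a nonzero dual codeword of weight at most $d(\CurC^\perp)-1$ is exactly the standard argument. The paper itself gives no proof, citing the result to MacWilliams and Sloane, and your argument is the textbook one, so there is nothing to compare beyond noting that your write-up supplies the details the paper omits.
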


Now we recall some definitions from~\cite{SCYCH18} and~\cite{SY15} related to (sequential) locally recoverable codes.

\begin{definition}
	Let $i \in [n]$ and $R_i \subseteq [n] \setminus \{i\}$.
	The subset $R_i$ is called an $(r, \CurC)$-\dfn{recovery set} of $i$ if $\abs{R_i} \leq r$ and there exist $a_{ij}\in \F_q$ such that  $x_i=\sum_{j\in R_i}{a_{ij}x_{j}}$ for all $(x_1, \ldots, x_n) \in \CurC$.

	We say a code $\CurC$ has \dfn{locality} $r$ if every symbol of $\CurC$ has a $(r, \CurC)$-recovery set.
	An ${[n, k]}_q$ code with locality $r$ is referred to as an $(n, k, r)$-\dfn{locally recoverable code (LRC)}.
\end{definition}

Many constructions of LRCs exist and bounds on the rate and a Singleton-like bound have been given for them.

\begin{theorem}\cite{TB14}\label{thm:optimal}
	Let $\CurC$ be an ${(n, k, r)}_q$ LRC with minimum distance $d$.
	Then
	\begin{align*}
		\frac{k}{n} \leq \frac{r}{r + 1},
		\qquad
		d \leq n - k - \left\lceil \frac{k}{r} \right\rceil + 2.
	\end{align*}
	An LRC with equality in the second bound is referred to as an \dfn{optimal LRC code}.
\end{theorem}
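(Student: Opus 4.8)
The plan is to fix a generator matrix $G\in\F_q^{k\times n}$ of $\CurC$ with columns $g_1,\dots,g_n\in\F_q^k$ (writing $G_S$ for the submatrix of columns indexed by a set $S\subseteq[n]$) and to reduce both inequalities to a single combinatorial quantity: how large a coordinate set can be while its columns stay rank-deficient. The key observation I would establish first is that if $\operatorname{rank}(G_S)\le k-1$, then the columns of $G_S$ lie in a hyperplane, so some nonzero $m\in\F_q^k$ is orthogonal to all of them; the codeword $mG$ then vanishes on $S$ and has weight at most $n-|S|$, giving $d\le n-|S|$. Dually, any $S$ with $\operatorname{rank}(G_S)=k$ trivially satisfies $|S|\le n$. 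Thus everything reduces to building, from the locality structure, a coordinate set that is simultaneously large and rank-controlled.

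The engine is a greedy accumulation of recovery sets. Starting from $S_0=\emptyset$, at each step I choose a coordinate $c_i$ whose column is not yet in $\operatorname{span}(G_{S_{i-1}})$, take an $(r,\CurC)$-recovery set $R_{c_i}$, and set $S_i=S_{i-1}\cup\{c_i\}\cup R_{c_i}$. By definition of a recovery set, $g_{c_i}=\sum_{j\in R_{c_i}}a_{c_i j}g_j$, so $g_{c_i}\in\operatorname{span}(G_{R_{c_i}})$ and hence $\operatorname{rank}(G_{S_i})=\operatorname{rank}(G_{S_{i-1}\cup R_{c_i}})$. The crux is to track two quantities per step: the rank increment $b_i=\operatorname{rank}(G_{S_i})-\operatorname{rank}(G_{S_{i-1}})$ and the deficiency $\delta_i=|S_i|-\operatorname{rank}(G_{S_i})$. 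Since $|R_{c_i}|\le r$ and $c_i$ is newly independent, one gets $1\le b_i\le r$; and since $c_i$ is added on top of at most $|R_{c_i}\setminus S_{i-1}|\le r$ columns that can raise the rank, the size grows by one more than the rank, i.e.\ $\delta_i-\delta_{i-1}\ge 1$. After $i$ steps this yields $|S_i|=\operatorname{rank}(G_{S_i})+\delta_i\ge\operatorname{rank}(G_{S_i})+i$.

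For the distance bound I would run the process until the rank first reaches $k-1$, and for the rate bound until it reaches $k$. Because each step raises the rank by at most $r$, reaching rank $k-1$ needs at least $\lceil (k-1)/r\rceil\ge\lceil k/r\rceil-1$ steps, so the resulting $S$ has $\operatorname{rank}(G_S)\le k-1$ and $|S|\ge(k-1)+(\lceil k/r\rceil-1)=k+\lceil k/r\rceil-2$; combined with $d\le n-|S|$ this gives $d\le n-k-\lceil k/r\rceil+2$. Running instead to rank $k$ produces an $S\subseteq[n]$ of size at least $k+\lceil k/r\rceil$, and $|S|\le n$ then forces $n\ge k+k/r$, i.e.\ $\frac{k}{n}\le\frac{r}{r+1}$.

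The step I expect to be the main obstacle is the endgame of the distance argument: one greedy step can push the rank from at most $k-2$ straight past $k-1$ to $k$, so $S_i$ may have rank $k$ rather than $k-1$. I plan to handle this by discarding, from the last group, one newly added column that is independent of the rest; this drops the rank back to exactly $k-1$ at the cost of a single coordinate, and the deficiency bookkeeping still leaves $|S|\ge k+\lceil k/r\rceil-2$. The only other point requiring care is the ceiling inequality $\lceil (k-1)/r\rceil\ge\lceil k/r\rceil-1$, which holds because lowering the numerator by one lowers the ceiling by at most one.
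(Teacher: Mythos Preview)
The paper does not supply a proof of this statement; it is quoted with a citation to \cite{TB14} (and the Singleton-type inequality itself originates in \cite{GHH12}), so there is nothing in the paper to compare against. Your argument is the standard greedy one and is essentially correct, with one caveat.

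The overshoot repair in the distance bound does not work as stated. If the final greedy step raises the rank from at most $k-2$ all the way to $k$, it can happen that \emph{every} newly added column lies in the span of the remaining columns of $S_i$, so deleting any single one of them leaves the rank at $k$. For a concrete picture over a field of odd characteristic, let $S_{i-1}$ span $\langle e_1,e_2\rangle$ in $\F_q^4$, take $R_{c_i}=\{a,b,c\}$ with $g_a=e_3$, $g_b=e_4$, $g_c=e_3+e_4$, and $g_{c_i}=g_a+g_b+g_c=2e_3+2e_4$; then $\operatorname{rank}(G_{S_i})=4$, yet removing any one of $a,b,c,c_i$ still leaves a set whose columns span $\F_q^4$. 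The clean fix is to avoid overshoot altogether: run the greedy process for exactly $m=\lceil k/r\rceil-1$ steps, which is always possible because after $j\le m$ steps the rank is at most $jr\le(\lceil k/r\rceil-1)r\le k-1<k$, so a fresh independent coordinate $c_{j+1}$ is always available. This gives $\operatorname{rank}(G_{S_m})\le k-1$ together with $|S_m|\ge\operatorname{rank}(G_{S_m})+m$. Now enlarge $S_m$ one column at a time until the rank is exactly $k-1$; each such addition raises the size by one and the rank by at most one, so the deficiency never drops, and the final set $S$ satisfies $|S|\ge(k-1)+m=k+\lceil k/r\rceil-2$, whence $d\le n-|S|\le n-k-\lceil k/r\rceil+2$. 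Your rate argument is fine as written.
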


The optimality above is with respect to the minimum distance of a locally recoverable code. If we have disjoint recovery sets, then the information theoretic optimal codes are \textbf{partial maximum distance separable (PMDS) codes}.\footnote{The PMDS property always implies the Hamming distance optimality from Theorem~\ref{thm:optimal}.} 
%
%
In other words, a code is PMDS if the projection to any recovery set is an MDS code, and if additionally we can recover the maximal amount of global erasures, i.e., additional erasures anywhere in the code. 

\begin{definition}\cite{HN20}
	Let $r, m, t_1, \ldots, t_m \in \N$, set $n = \sum_{i = 1}^m (r + t_i)$, and let $\CurC \subseteq \F_q^n$ be the linear code given by the generator matrix $G = (B_1 | \cdots | B_m) \in \F_q^{k \times n}$ where each $B_i \in \F_q^{k \times (r + t_i)}$.
	Then, we call $\CurC$ an $[n, k, r; t_1, \ldots, t_m]$-\dfn{partial MDS (PMDS) code} if
	\begin{itemize}
		\item each $B_i$ is the generator matrix of an $[r + t_i, r]$ MDS code and
		\item for any combination of $\sum_{i = 1}^m t_i$ erasures with $t_i$ erasures in $i$-th block for each $i$, the remaining code (after puncturing the coordinates of the erasures) is an $[mr, k]$-MDS code.
	\end{itemize}
\end{definition}

For simplicity we assume $t_1 = \cdots = t_m = t'$ in the following.

\begin{proposition}
	A $[n, k, r; t', \ldots, t']$-PMDS code can correct at least $t = n - k - t' \lceil \tfrac{k}{r} \rceil$ erasures, out of which $t_i$ erasures can be recovered with locality $r$ in the $i$-th recovery set corresponding to $B_i$ above, for any $i\in \{1,\dots,m\}$. The remaining erasures need to be recovered globally in the whole code.
\end{proposition}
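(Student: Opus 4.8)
The plan is to reason directly about erasure decoding and then to bound the worst case, rather than to invoke a Singleton-type bound (Theorem~\ref{thm:optimal} covers only the single-local-parity regime). Fix an erasure pattern $E \subseteq [n]$ and, for each $i \in \{1, \dots, m\}$, let $e_i$ be the number of erasures among the $r + t'$ coordinates of $B_i$. I would decode in three steps: (i) in every block select $t'$ coordinates to puncture, taking them to be erased coordinates whenever possible; (ii) recover the residual erasures on the surviving $mr$ coordinates using the $[mr, k]$-MDS code supplied by the second bullet of the PMDS definition; and (iii) recover the punctured coordinates block by block from the $[r + t', r]$-MDS property, i.e.\ from any $r$ surviving symbols of the block, which is exactly recovery with locality $r$. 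In a block with $e_i \le t'$ all erased positions are punctured, so its $r$ surviving coordinates are directly known and its $e_i \le t'$ erasures are repaired locally in step~(iii); in a block with $e_i > t'$, exactly $t'$ erasures are punctured and repaired locally, while the remaining $e_i - t'$ erasures lie among the surviving coordinates and must be repaired globally in step~(ii). Thus step~(ii) must recover $\sum_i \max(e_i - t', 0)$ symbols, and this succeeds precisely when that number does not exceed the MDS erasure-correcting capability $mr - k$. This is exactly the local/global split asserted in the statement.

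This reduces the proposition to the counting claim that every $E$ with $\abs{E} \le t$ satisfies $\sum_i \max(e_i - t', 0) \le mr - k$. I would prove the contrapositive by finding the smallest uncorrectable pattern, i.e.\ by minimizing $\sum_i e_i$ subject to $\sum_i \max(e_i - t', 0) \ge mr - k + 1$ together with $0 \le e_i \le r + t'$. Writing $g_i = \max(e_i - t', 0) \in \{0, 1, \dots, r\}$ and letting $B = \{i : g_i \ge 1\}$, each contributing block costs an extra $t'$ erasures on top of its residual, so the total weight is at least $(mr - k + 1) + \abs{B}\, t'$. Since each $g_i \le r$, pushing the residual up to $mr - k + 1$ forces $\abs{B} \ge \lceil (mr - k + 1)/r \rceil$, attained by concentrating erasures into that many full blocks. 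Hence the minimum uncorrectable weight equals $(mr - k + 1) + \lceil (mr - k + 1)/r \rceil\, t'$. Using the identity $\lceil (mr - k + 1)/r \rceil = m - \lceil k/r \rceil + 1$, this is $t + t' + 1$, so every pattern of weight at most $t + t'$ is correctable; in particular the stated bound $t = n - k - t' \lceil k/r \rceil$ holds (the argument even exhibits a slack of $t'$).

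The step I expect to be the main obstacle is the combinatorial optimization in the second paragraph: making rigorous, e.g.\ via an exchange argument that shifts erasures out of under-erased blocks into over-erased ones, that concentrating erasures into full blocks minimizes the weight, and then carrying the ceiling bookkeeping, including the identity $\lceil (mr - k + 1)/r \rceil = m - \lceil k/r \rceil + 1$. A secondary point needing care is step~(i)--(ii): I must verify that the $t'$ punctured positions per block can always be chosen to cover the erased coordinates of every under-erased block, so that those blocks are fully known on the surviving $mr$ positions, while still leaving a genuine $[mr, k]$-MDS code to drive step~(ii); this is precisely the uniformity guaranteed by the PMDS definition, which asserts the MDS property for every admissible choice of $t'$ punctured coordinates per block.
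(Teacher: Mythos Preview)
The paper does not actually give a proof of this proposition; it is stated as a background fact about PMDS codes and the text immediately moves on. So there is no ``paper's approach'' to compare against.

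Your argument is correct. The decoding scheme (puncture $t'$ positions per block, preferring erased ones; globally decode the residual pattern in the $[mr,k]$-MDS code guaranteed by the second bullet of the PMDS definition; then locally recover each block from its $r$ surviving symbols via the $[r+t',r]$-MDS property) is exactly the standard way to read the PMDS definition operationally, and it cleanly justifies the local/global split in the statement. The combinatorial minimization is also right: concentrating the residual into as few blocks as possible is optimal because each block contributing residual costs an additional $t'$ ``overhead'' erasures, and your ceiling identity $\lceil (mr-k+1)/r\rceil = m - \lceil k/r\rceil + 1$ checks out (write $k = ar+b$ with $0\le b<r$ and treat $b=0$ and $b\ge 1$ separately). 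Your observation that the minimum uncorrectable weight is $t+t'+1$, not $t+1$, is a genuine strengthening of the stated bound; the proposition as written is loose by exactly $t'$.

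The two points you flag as potential obstacles are not serious. The exchange argument is immediate once you note that $\sum_i e_i = \sum_i g_i + \abs{B}\,t'$ for the minimizing configuration, so with $\sum_i g_i$ fixed at $mr-k+1$ you are simply minimizing $\abs{B}$ under $g_i\le r$. And the concern about step~(i)--(ii) is exactly answered, as you say, by the PMDS definition quantifying over \emph{every} choice of $t'$ punctured coordinates per block.
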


In this paper we are mainly interested in the case where we do not have disjoint recovery sets, and where sequential recovery is beneficial over parallel recovery. We explain these concepts in the following.

\begin{definition}
	Let $E$ be a $t$-subset of $[n]$.
	$\CurC$ is said to be $(E, r)$-\dfn{recoverable} if there exists a sequential indexing of the elements of $E$, say $(i_1, i_2, \ldots, i_{t})$, such that each $i_j \in E$ has an $(r, \CurC)$-recovery set $R_{i_j} \subseteq \overline{E} \cup \{i_1, i_2, \ldots, i_{j - 1}\}$ of size $\abs{R_{i_j}} \leq r$ where $\overline{E} = [n] \setminus E$.
	An ${[n,k]}_q$-linear code $\CurC$ is said to be an $(n, k, r, t)$-\dfn{sequential locally recoverable code (SLRC)} if for each $E \subseteq [n]$ of size $\abs{E} \leq t$, $\CurC$ is an $(E, r)$-recoverable code.
\end{definition}

Note that $t$ is the upper bound of the number of (sequential locally) recoverable failed nodes and that by definition we have $2 \leq \abs{R_i} \leq r \leq k < n$ where $R_i$ is a recovery set for some $(n, k, r, t)$-SLRC\@.

\begin{proposition}\cite{SCYCH18}\label{prop:SLRCequiv}
	$\CurC$ is an $(n, k, r, t)$-SLRC if and only if for any nonempty $E \subseteq [n]$ of size $\abs{E} \leq t$ there exists an $i \in E$ such that $i$ has an $(r, \CurC)$-recovery set $R_i \subseteq [n] \setminus E$.
\end{proposition}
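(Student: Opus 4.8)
The plan is to prove the two implications separately, with the forward direction being essentially immediate and the reverse direction handled by a greedy ``peeling'' argument that strips off one recoverable coordinate at a time.

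For the forward direction, I would assume $\CurC$ is an $(n, k, r, t)$-SLRC and fix a nonempty $E \subseteq [n]$ with $\abs{E} \leq t$. By definition, $\CurC$ is $(E, r)$-recoverable, so there is a sequential indexing $(i_1, i_2, \ldots, i_{\abs{E}})$ of $E$ in which the very first index $i_1$ admits an $(r, \CurC)$-recovery set $R_{i_1} \subseteq \overline{E} \cup \emptyset = [n] \setminus E$ of size at most $r$. Taking $i = i_1$ furnishes the required element, so this direction follows directly from unwinding the definition.

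For the reverse direction, I would argue by induction on $\abs{E}$ that the stated condition forces $(E, r)$-recoverability for every $E$ with $\abs{E} \leq t$. The base case $\abs{E} = 1$ is precisely the hypothesis applied to the singleton. For the inductive step, given $E$ with $\abs{E} \leq t$, the hypothesis produces some $i_1 \in E$ with an $(r, \CurC)$-recovery set $R_{i_1} \subseteq [n] \setminus E = \overline{E}$. I then set $E' := E \setminus \{i_1\}$, note that $\abs{E'} = \abs{E} - 1 < \abs{E} \leq t$, and invoke the inductive hypothesis to obtain a valid sequential indexing $(i_2, \ldots, i_{\abs{E}})$ of $E'$. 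The crucial identity is $\overline{E'} = \overline{E} \cup \{i_1\}$, so that each recovery set used in the indexing of $E'$, which lies in $\overline{E'} \cup \{i_2, \ldots, i_{j - 1}\}$, automatically lies in $\overline{E} \cup \{i_1, i_2, \ldots, i_{j - 1}\}$. Prepending $i_1$ therefore yields a sequential indexing $(i_1, i_2, \ldots, i_{\abs{E}})$ witnessing that $\CurC$ is $(E, r)$-recoverable.

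The only subtlety---and the step I would be most careful about---is verifying that every intermediate set obtained by deleting an already-recovered coordinate still has size at most $t$, so that the hypothesis (stated only for sets of size $\leq t$) continues to apply throughout the peeling. Since we only ever remove elements, each such set has size strictly smaller than $\abs{E} \leq t$, so this condition holds automatically; making this observation explicit is what legitimizes the induction. The membership bookkeeping $R_{i_j} \subseteq \overline{E} \cup \{i_1, \ldots, i_{j - 1}\}$ is then purely a matter of unwinding the complement identities and requires no further ideas.
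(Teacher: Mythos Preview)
Your proof is correct. Note, however, that the paper does not actually supply a proof of this proposition: it is stated with a citation to~\cite{SCYCH18} and used as a black box thereafter, so there is no in-paper argument to compare against. Your peeling/induction argument is the standard way to establish this equivalence and would be the expected proof in the cited reference as well.
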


Another interesting parameter of locally recoverable codes is its {repair alternativity}, which refers to the number of different recovery sets available to  recover lost data:

\begin{definition}\cite{PJHO13} 
	Let $\CurC$ be an $(n, k, r, t)$-SLRC\@ and for $i \in [n]$ let
	\begin{align*}
		\Omega_r(i) = \{v \in \CurC^\perp \mid i \in \supp(v), \wt(v) \leq r + 1\}.
	\end{align*}
	We say the \dfn{repair alternativity} of $i \in [n]$ is
	\begin{align*}
		a(i) = \abs{\{\supp(v) \mid v \in \Omega_r(i)\}}
	\end{align*}
	and that the \dfn{repair alternativity} of the code $\CurC$ is
	\begin{align*}
		a = \min_{i \in [n]} \{a(i)\}.
	\end{align*}
\end{definition}

In the setting above we call the elements of $\Omega_r(i)$ the \dfn{recovery vectors} for coordinate $i$ in the code $\CurC$.



\begin{remark}
    A concept similar to alternativity is \textbf{availability}, which is often used in the context of parallel recovery. However, in the setting of sequential local recovery, alternativity is more relevant. While availability quantifies the number of disjoint recovery sets for simultaneous repairs in parallel recovery, sequential recovery relies on the adaptive use of previously recovered symbols, making alternativity the appropriate measure for characterizing the flexibility in the recovery process.
\end{remark}

It is easy to see and well-known what the locality and alternativity of MDS codes are. For completeness we include a proof below.
\begin{proposition}
	Let $\CurC$ be an ${[n, k, d]}_q$-MDS code.
	Then any erased symbol in $\CurC$ can be recovered from any other $k$ symbols.
	In other words, an $[n, k, d]$-MDS code has locality $k$ and alternativity $\binom{n - 1}{k}$.
\end{proposition}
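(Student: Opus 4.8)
The plan is to reduce both claims to two standard facts about an MDS code $\CurC$: that its dual $\CurC^\perp$ is again MDS, with parameters $[n, n-k, k+1]_q$, and equivalently (via Lemma~\ref{lemma:linind1} applied to $\CurC$, since $\dist(\CurC^\perp) - 1 = k$) that any $k$ columns of a generator matrix $G \in \F_q^{k \times n}$ are linearly independent. The recoverability-from-any-$k$-symbols statement and the locality claim I would handle directly with the generator matrix, while the alternativity count is cleanest through the dual code.

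For the recovery and locality claim, fix a coordinate $i \in [n]$ and an arbitrary $k$-subset $R \subseteq [n] \setminus \{i\}$. Writing $g_1, \dots, g_n \in \F_q^k$ for the columns of $G$, the $k$ columns $\{g_j : j \in R\}$ are linearly independent by the MDS property, hence form a basis of $\F_q^k$. I would then expand $g_i = \sum_{j \in R} a_{ij} g_j$ in this basis; for any codeword $x = uG$ with $u \in \F_q^k$ this yields $x_i = \langle u, g_i \rangle = \sum_{j \in R} a_{ij} \langle u, g_j \rangle = \sum_{j \in R} a_{ij} x_j$, so $R$ is an $(k, \CurC)$-recovery set for $i$. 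As $i$ and $R$ were arbitrary, every symbol is recoverable from any $k$ others, and the code has locality (at most, hence exactly) $k$; tightness follows because no $k - 1$ coordinates determine a codeword in an MDS code.

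For the alternativity, note that with $r = k$ the set $\Omega_k(i)$ consists of the dual codewords $v \in \CurC^\perp$ with $i \in \supp(v)$ and $\wt(v) \leq k + 1$. Since $\dist(\CurC^\perp) = k + 1$, every nonzero dual codeword has weight at least $k + 1$, so each $v \in \Omega_k(i)$ has weight exactly $k + 1$, and its support is a $(k+1)$-subset of $[n]$ containing $i$. The key step is to show that every such subset actually occurs as a support: for a fixed $(k+1)$-subset $T \ni i$, the shortened code $\{v \in \CurC^\perp : \supp(v) \subseteq T\}$ has dimension at least $(n - k) - (n - \abs{T}) = 1$ by rank--nullity, and at most $1$ because two independent codewords supported in $T$ would admit a nonzero combination vanishing on one coordinate of $T$, giving weight $\leq k$ and contradicting $\dist(\CurC^\perp) = k + 1$. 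Thus it is exactly one-dimensional, and its nonzero vectors all have support exactly $T$. Consequently $v \mapsto \supp(v)$ maps $\Omega_k(i)$ onto the family of $(k+1)$-subsets of $[n]$ containing $i$, so $a(i) = \abs{\{\supp(v) : v \in \Omega_k(i)\}}$ equals the number of such subsets, namely $\binom{n-1}{k}$. Since this holds for every $i$, we get $a = \binom{n-1}{k}$.

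The main obstacle is the two-sided dimension bound on the shortened dual code, together with the observation that minimum distance $k + 1$ forces every such dual codeword to have full support $T$; once that is in place, the count $\binom{n-1}{k}$ is an immediate combinatorial identity (choosing the remaining $k$ support coordinates among $[n] \setminus \{i\}$), and the locality part is purely linear-algebraic.
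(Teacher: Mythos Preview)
Your proof is correct. The locality argument is essentially identical to the paper's: both use that any $k$ columns of an MDS generator matrix are linearly independent, so any $k$ other coordinates recover a given one.

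For the alternativity, you take a more careful route than the paper. The paper simply observes that there are $\binom{n-1}{k}$ ways to choose $k$ other symbols and stops there, implicitly identifying recovery sets with supports in $\Omega_k(i)$. You instead work directly from the definition of $a(i)$ via $\Omega_k(i) \subseteq \CurC^\perp$: using that $\CurC^\perp$ is $[n,n-k,k+1]$-MDS, you show every $v \in \Omega_k(i)$ has weight exactly $k+1$, and then prove the bijection between $(k+1)$-subsets $T \ni i$ and minimum-weight dual supports by bounding the shortened dual code $\{v \in \CurC^\perp : \supp(v) \subseteq T\}$ to dimension exactly $1$. This buys you a genuinely complete argument that the count of \emph{supports} (rather than recovery sets) is $\binom{n-1}{k}$, which the paper's two-line proof leaves as an implicit identification. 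Your approach is slightly longer but closes a gap the paper glosses over; the paper's approach is faster if one is willing to equate ``$k$-subsets of $[n]\setminus\{i\}$'' with ``supports of low-weight dual words through $i$'' without comment.
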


\begin{proof}
	Since all maximal minors of an MDS generator matrix are non-zero, any erased symbol can be recovered by solving a system of $k$ linear equations, corresponding to any $k$ non-erased symbols. Therefore, the locality is $k$. 
	Since we can use any $k$ (other symbols) for a fixed erased symbol, there are $\binom{n-1}{k}$ choices for these other symbols.
\end{proof}


\subsection{The Kronecker Product of Matrices and Codes}

As our code construction mainly relies on the Kronecker product of generator matrices of codes, we will define it and state its main properties in this subsection.

\begin{definition}
	Let $A$ be an $m_1 \times n_1$ matrix with entries
	\begin{align*}
		A = \begin{bmatrix}
			a_{1 1} & \cdots & a_{1 n_1} \\
			\vdots & \ddots & \vdots \\
			a_{m_1 1} & \cdots & a_{m_1 n_1}
		\end{bmatrix}
	\end{align*}
	and let $B$ be an $m_2 \times n_2$ matrix.
	Then the \dfn{Kronecker product} $A \otimes B$ is the $m_1 m_2 \times n_1 n_2$ matrix
	\begin{align*}
		A \otimes B = \begin{bmatrix}
			a_{1 1} B & \cdots & a_{1 n_1} B \\
			\vdots & \ddots & \vdots \\
			a_{m_1 1} B & \cdots & a_{m_1 n_1} B
		\end{bmatrix}.
	\end{align*}
\end{definition}


\begin{proposition}\cite{HJ94}
	Let $A$, $B$, $C$, and $D$ be matrices such that $AC$ and $BD$ are valid products.
	Then
	\begin{itemize}
		\item $(A \otimes B) (C \otimes D) = (AC) \otimes (BD)$,
		\item ${(A \otimes B)}^\top = A^\top \otimes B^\top$.
	\end{itemize}
\end{proposition}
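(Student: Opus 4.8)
The plan is to prove both identities by direct entry-wise computation, using a double-index notation for the rows and columns of a Kronecker product. I would index the rows of $A \otimes B$ by pairs $(i, p)$ and its columns by pairs $(j, q)$, so that the defining block structure reads $(A \otimes B)_{(i,p),(j,q)} = a_{ij} b_{pq}$; this simply records the fact that the $(i,j)$ block of $A \otimes B$ is $a_{ij} B$, whose $(p,q)$ entry is $a_{ij} b_{pq}$.

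For the mixed-product identity, first I would fix compatible dimensions $A \in \F_q^{m_1 \times n_1}$, $C \in \F_q^{n_1 \times p_1}$, $B \in \F_q^{m_2 \times n_2}$, $D \in \F_q^{n_2 \times p_2}$ and verify that both sides are $m_1 m_2 \times p_1 p_2$ matrices, so the identity is at least well-typed. Then I would expand a single entry of the left-hand side, summing over the shared inner dimension indexed by pairs $(j,q)$, obtaining $\bigl[(A\otimes B)(C\otimes D)\bigr]_{(i,p),(s,t)} = \sum_{j,q} a_{ij} b_{pq} c_{js} d_{qt}$. The crux is that this double sum factors as $\bigl(\sum_j a_{ij} c_{js}\bigr)\bigl(\sum_q b_{pq} d_{qt}\bigr) = (AC)_{is}(BD)_{pt}$, which is exactly the $\bigl((i,p),(s,t)\bigr)$ entry of $(AC) \otimes (BD)$.

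For the transpose identity I would again argue entry-wise: transposition interchanges the row and column pairs, so the $\bigl((j,q),(i,p)\bigr)$ entry of $(A\otimes B)^\top$ equals the $\bigl((i,p),(j,q)\bigr)$ entry of $A \otimes B$, namely $a_{ij}b_{pq} = (A^\top)_{ji}(B^\top)_{qp}$, which is precisely the matching entry of $A^\top \otimes B^\top$.

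I expect the only genuine obstacle to be bookkeeping: ensuring the pair-indexing is applied consistently to all four matrices, so that the single summation index of the ordinary matrix product splits cleanly into the two independent indices $j$ and $q$. Once this indexing convention is fixed, both identities reduce to the elementary observation that a product of two independent finite sums equals the sum of the products of their terms, and no deeper structure is required.
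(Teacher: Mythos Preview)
Your entry-wise argument is correct and is the standard proof of these Kronecker product identities. The paper itself does not give a proof of this proposition; it simply cites it from \cite{HJ94}, so there is nothing to compare against beyond noting that your approach is exactly the classical one found in that reference.
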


\begin{lemma}\label{prop:tenvec}
	Let $\K$ be an arbitrary field and let $u \in \K^m$ and $v \in \K^n$ be vectors.
	Then if $i \in [mn]$, there exists a unique pair $(j, k) \in [m] \times [n]$ such that $i = (j - 1) n + k$ and $i \in \supp(u \otimes v)$ if and only if $j \in \supp(u)$ and $k \in \supp(v)$.
\end{lemma}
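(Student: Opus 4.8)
The plan is to split the statement into its two constituent assertions and handle them in turn: first, that the linear index $i$ decomposes \emph{uniquely} into a block coordinate $j$ and a within-block coordinate $k$ via $i = (j-1)n + k$; and second, that under this decomposition the $i$-th entry of $u \otimes v$ is exactly $u_j v_k$, so that whether it vanishes is governed coordinatewise. The only substantive input beyond indexing bookkeeping is that $\K$, being a field, is an integral domain and hence has no zero divisors.

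For the indexing claim I would invoke the division algorithm applied to $i - 1$. Since $1 \leq i \leq mn$ we have $0 \leq i - 1 \leq mn - 1$, so I may write $i - 1 = qn + s$ with $0 \leq s \leq n - 1$ and correspondingly $0 \leq q \leq m - 1$; setting $j = q + 1 \in [m]$ and $k = s + 1 \in [n]$ yields $i = (j-1)n + k$. Uniqueness follows from the uniqueness of quotient and remainder: any pair $(j', k') \in [m] \times [n]$ with $i = (j'-1)n + k'$ satisfies $k' - 1 \equiv i - 1 \pmod{n}$ with $0 \leq k' - 1 \leq n - 1$, forcing $k' = k$ and then $j' = j$. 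Equivalently, this merely records that $(j, k) \mapsto (j-1)n + k$ is a bijection $[m] \times [n] \to [mn]$.

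For the support characterization I would read the $i$-th entry of $u \otimes v$ directly off the block form of the Kronecker product. Viewing $u$ as a $1 \times m$ array and $v$ as a $1 \times n$ array, the definition gives $u \otimes v = (u_1 v \mid u_2 v \mid \cdots \mid u_m v)$, a concatenation of $m$ blocks each of length $n$. The index $i = (j-1)n + k$ then points to the $k$-th entry of the $j$-th block $u_j v$, so $(u \otimes v)_i = u_j v_k$.

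Finally, $i \in \supp(u \otimes v)$ means precisely $u_j v_k \neq 0$, and since $\K$ has no zero divisors this holds if and only if both $u_j \neq 0$ and $v_k \neq 0$, that is, $j \in \supp(u)$ and $k \in \supp(v)$. I do not anticipate a genuine obstacle here; the single point demanding care is keeping the index convention consistent---confirming that position $i = (j-1)n + k$ lands in block $j$ at offset $k$, with no off-by-one error---which the division-algorithm computation above pins down exactly.
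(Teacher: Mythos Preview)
Your proof is correct and follows essentially the same approach as the paper: establish that $(j,k)\mapsto (j-1)n+k$ is a bijection $[m]\times[n]\to[mn]$, then observe that the $i$-th entry of $u\otimes v$ equals $u_j v_k$. The paper's proof is more terse (it simply asserts the bijection and the entry formula without spelling out the division-algorithm argument or the no-zero-divisors step), but the underlying argument is identical.
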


\begin{proof}
	The map 
	\begin{align*}
		f:
		[m] \times [n] &\to [mn] \\
		(j, k) &\mapsto (j - 1)n + k
	\end{align*}
	is a bijection.
	Furthermore, if $u = (u_1, \ldots, u_m)$ and $v = (v_1, \ldots, v_n)$, then given $i \in [mn]$ and $(j, k) = f^{-1}(i)$, the $i$th entry of the product $u \otimes v$ is equal to $u_j \cdot v_k$.
\end{proof}

\begin{definition}\cite{MS77} 
	Let $\CurA$ and $\CurB$ be $[n_1, k_1]$ and $[n_2, k_2]$ linear codes with generator matrices $G_A$ and $G_B$, respectively.
	Then the \dfn{product code} $\CurA \Cpro \CurB$ is the $[n_1 n_2, k_1 k_2]$ linear code with generator matrix $G_A \otimes G_B$.
\end{definition}

\begin{lemma}\cite{MS77}\label{lem:proddist}
The product code of codes with minimum distances $d_1$ and $d_2$ has minimum distance $d_1 d_2$.
\end{lemma}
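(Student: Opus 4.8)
The plan is to prove the two inequalities $\dist(\CurA \Cpro \CurB) \le d_1 d_2$ and $\dist(\CurA \Cpro \CurB) \ge d_1 d_2$ separately, throughout identifying a length-$n_1 n_2$ codeword with an $n_1 \times n_2$ array via the bijection $f$ of Lemma~\ref{prop:tenvec} (taking $m = n_1$, $n = n_2$).

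For the upper bound I would exhibit a single codeword of weight exactly $d_1 d_2$. Choose $a \in \CurA$ with $\wt(a) = d_1$ and $b \in \CurB$ with $\wt(b) = d_2$; these exist since $d_1, d_2$ are the minimum weights of the two linear codes. Writing $a = u G_A$ and $b = v G_B$, the mixed-product property gives $(u \otimes v)(G_A \otimes G_B) = (u G_A) \otimes (v G_B) = a \otimes b$, so $a \otimes b$ is a codeword of $\CurA \Cpro \CurB$. By Lemma~\ref{prop:tenvec} its support is the image under $f$ of $\supp(a) \times \supp(b)$, whence $\wt(a \otimes b) = \wt(a)\,\wt(b) = d_1 d_2$. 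This forces $\dist(\CurA \Cpro \CurB) \le d_1 d_2$.

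For the lower bound I would first record the row/column structure of an arbitrary codeword. Indexing coordinates by $i = (p - 1)n_2 + s$ with $(p, s) \in [n_1] \times [n_2]$ as in Lemma~\ref{prop:tenvec}, arrange a codeword $c$ into the matrix $M \in \F_q^{n_1 \times n_2}$ with $M_{p,s} = c_i$. Expanding $c = \sum_{i,j} u_{ij}\,(g_i^A \otimes g_j^B)$ over the rows $g_i^A$ of $G_A$ and $g_j^B$ of $G_B$, and reading off entries via Lemma~\ref{prop:tenvec}, one sees that each row of $M$ is an $\F_q$-linear combination of the rows of $G_B$, hence a codeword of $\CurB$, and likewise each column of $M$ is a codeword of $\CurA$. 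Now take any nonzero $c$ and fix a nonzero row of $M$: as a nonzero codeword of $\CurB$ it has at least $d_2$ nonzero entries, occupying a column set $J$ with $\abs{J} \ge d_2$. Each column indexed by $J$ then has a nonzero entry in the chosen row, so it is a nonzero codeword of $\CurA$ and contributes at least $d_1$ nonzero entries; since distinct columns occupy disjoint coordinate positions, $\wt(c) = \wt(M) \ge \sum_{b \in J} \wt(\text{column } b) \ge \abs{J}\, d_1 \ge d_1 d_2$, which gives the lower bound.

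The two weight counts are routine once the structure is in place; the only step requiring care is the matrix characterization, namely that fixing one tensor index yields a codeword of the corresponding factor code. I expect this bookkeeping with the bijection $f$ of Lemma~\ref{prop:tenvec} to be the main (though modest) obstacle, as everything else follows by combining it with the mixed-product identity.
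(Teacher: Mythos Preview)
Your argument is correct and is the classical proof of this fact. Note, however, that the paper does not supply its own proof of this lemma: it is simply quoted from \cite{MS77} and used as a black box. The proof you outline---exhibiting $a\otimes b$ for minimum-weight $a,b$ to get the upper bound, and using the row/column codeword structure of the $n_1\times n_2$ array to get the lower bound---is exactly the standard argument one finds in that reference, so there is nothing to compare.
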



\section[General q-ary SLRC Construction]{General $q$-ary SLRC Construction}

We will now present and analyze our construction for sequential locally recoverable codes from the code product. We will first state it in all generality before constructing explicit codes with this machinery in the following section. 



Our code construction is the $\ell$-fold product code of locally recoverable codes with the same locality parameter $r$ and is a generalization of a similar construction in~\cite{BPK16-1}, where the authors restrict to the case of $q=\ell = 2$.
The lengths, dimensions and erasure correction capabilities of the small codes can be variable.
Depending on the small codes' parameters we can determine the corresponding parameters of the product code, as we show in  Theorem \ref{thm:prod} and Corollary \ref{cor:prod}. To prove the main result we need the following lemma:

\begin{lemma}\label{lem:recvecs}
    For $i \in \{1, 2\}$, let ${\CurC}_i$ be an $(n_i, k_i, r, t_i)-$SLRC code over a field $\F_q$, and let $e_m^{(i)}$ be the $m$th unit vector in $\F_q^{n_i}$.
    \begin{enumerate}[label={(\alph*)}]
        \item For any recovery vector $c^{(1)}\in \CurC_1^\perp$ for coordinate $j\in \{1,\dots,n_1\}$ in the code $\CurC_1$ of weight $\bar r \leq r + 1$, the vector $c^{(1)} \otimes e_m^{(2)}$ is a recovery vector for coordinate $jm$ in the code $\CurC_1 \otimes \CurC_2$ of weight $\bar r$.
        \item For any recovery vector $c^{(2)}\in \CurC_2^\perp$ for coordinate $j\in \{1,\dots,n_2\}$ in the code $\CurC_2$ of weight  $\bar r\leq r + 1$, the vector $e_m^{(1)} \otimes c^{(2)}$ is a recovery vector for coordinate $(m - 1) n_2 + j$ in the code $\CurC_1 \otimes \CurC_2$ of weight $\bar r$.
    \end{enumerate}
\end{lemma}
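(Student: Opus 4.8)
The plan is to verify, for each of the two vectors, the three defining conditions of a recovery vector for the stated coordinate in $\CurC_1 \otimes \CurC_2$: that it lies in $(\CurC_1 \otimes \CurC_2)^\perp$, that the stated coordinate lies in its support, and that its Hamming weight equals $\bar r \le r+1$. Together these place the vector in the set $\Omega_r(\cdot)$ from the definition of repair alternativity, which is exactly what it means to be a recovery vector. I will carry out part (a) in full and obtain part (b) by the symmetric argument, exchanging the roles of the two tensor factors.

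For the orthogonality, the key algebraic tool is the inner-product identity $\langle u_1 \otimes u_2,\, v_1 \otimes v_2\rangle = \langle u_1, v_1\rangle\,\langle u_2, v_2\rangle$, valid for $u_1, v_1 \in \F_q^{n_1}$ and $u_2, v_2 \in \F_q^{n_2}$. This follows directly from the stated Kronecker identities $(A \otimes B)(C \otimes D) = (AC)\otimes(BD)$ and $(A \otimes B)^\top = A^\top \otimes B^\top$: reading the vectors as single rows and computing $(u_1 \otimes u_2)(v_1 \otimes v_2)^\top = (u_1 v_1^\top)\otimes(u_2 v_2^\top)$ collapses the right-hand side to the product of the two scalar inner products. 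Since $\CurC_1 \otimes \CurC_2$ is spanned by the tensors $g_a \otimes h_b$ with $g_a$ a row of $G_A \in \CurC_1$ and $h_b$ a row of $G_B \in \CurC_2$, it suffices to check orthogonality against these generators. For $c^{(1)} \otimes e_m^{(2)}$ the identity gives $\langle c^{(1)} \otimes e_m^{(2)},\, g_a \otimes h_b\rangle = \langle c^{(1)}, g_a\rangle\,\langle e_m^{(2)}, h_b\rangle$, and the first factor vanishes because $c^{(1)} \in \CurC_1^\perp$ while $g_a \in \CurC_1$; hence $c^{(1)}\otimes e_m^{(2)} \in (\CurC_1\otimes\CurC_2)^\perp$. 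For part (b) the second factor $\langle c^{(2)}, h_b\rangle$ vanishes instead.

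The support and weight claims are then read off from Lemma \ref{prop:tenvec}. Applying it with $u = c^{(1)}$ and $v = e_m^{(2)}$, the index $i = (j'-1)n_2 + k'$ lies in $\supp(c^{(1)} \otimes e_m^{(2)})$ exactly when $j' \in \supp(c^{(1)})$ and $k' \in \supp(e_m^{(2)}) = \{m\}$; since $j \in \supp(c^{(1)})$ (because $c^{(1)}$ recovers coordinate $j$), the coordinate $(j-1)n_2 + m$, i.e.\ the coordinate labelled $jm$ via the bijection $f$ of that lemma, indeed lies in the support. The same characterization counts the support, which is in bijection with $\supp(c^{(1)}) \times \{m\}$, so the weight equals $\abs{\supp(c^{(1)})}\cdot 1 = \bar r$. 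Part (b) is identical after exchanging the factors, giving support coordinate $(m-1)n_2 + j$ and weight $1 \cdot \bar r = \bar r$.

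I do not expect a serious obstacle: once the inner-product identity is in hand the argument is essentially bookkeeping. The only points demanding care are the reduction of orthogonality from all of $\CurC_1\otimes\CurC_2$ to its spanning tensors, which is legitimate precisely because the product code is defined as the row span of $G_A \otimes G_B$ and each such row has the form $g_a \otimes h_b$, and the consistent identification of the pair $(j,m)$ with the single index $(j-1)n_2+m$ through Lemma \ref{prop:tenvec}, so that the recovered coordinate is named unambiguously.
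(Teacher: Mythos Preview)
Your proof is correct and follows essentially the same approach as the paper: the paper verifies orthogonality by computing $(G_1 \otimes G_2)(c^{(1)} \otimes e_m^{(2)})^\top = (G_1 {c^{(1)}}^\top) \otimes (G_2 {e_m^{(2)}}^\top) = 0$ via the mixed-product property, which is just the matrix form of your row-by-row inner-product check, and then invokes Lemma~\ref{prop:tenvec} exactly as you do for the support and weight claims. Your remark that the coordinate ``$jm$'' in the statement should be read as $(j-1)n_2+m$ via the bijection $f$ is also how the paper interprets it in its proof.
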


\begin{proof}
    We first note that it is easy to see that $\wt(c^{(1)} \otimes e_m^{(2)}) = \wt(c^{(1)})$ and $  \wt(e_m^{(1)} \otimes c^{(2)})= \wt(c^{(2)})$. 
    
    Let $G_i$ be a generator matrix of $\CurC_i$, respectively. Since $c^{(1)}\in \CurC_1^\perp$, we have
    \begin{align*}
		(G_1 \otimes G_2){(c^{(1)} \otimes e_m^{(2)})}^\top
		&= (G_1 \otimes G_2)({c^{(1)}}^\top \otimes {e^{(2)}}^\top_m) \\
		&= (G_1 {c^{(1)}}^\top) \otimes (G_2 {e^{(2)}}^\top_m) \\
		&= {(0, \ldots, 0)}^\top \otimes (G_2 {e^{(2)}}^\top_m) \\
		&= {(0, \ldots, 0)}^\top,
	\end{align*}
    i.e., $c^{(1)} \otimes e_m^{(2)} \in (\CurC_1 \otimes \CurC_2)^\perp$.
    Similarly, we get that $e_m^{(1)} \otimes c^{(2)} \in (\CurC_1 \otimes \CurC_2)^\perp$, for $c^{(2)} \in \CurC_2^\perp$.

    Lastly, by application of Lemma~\ref{prop:tenvec}, if $c^{(1)}$ is a recovery vector for coordinate $j$, then $j \in \supp(c^{(1)})$.
    This implies that $(j - 1) n_2 + m \in \supp(c^{(1)} \otimes e_m^{(2)})$ and thus that $c^{(1)} \otimes e_m^{(2)}$ is a recovery vector for coordinate $(j - 1) n_2 + m$, for $m \in \{1, \ldots, n_2\}$.
    Analogously, we get that $e_m^{(1)} \otimes c^{(2)}$ is a recovery vector for coordinate $(m - 1) n_2 + j$, for $m \in \{1, \ldots, n_1\}$.
\end{proof}

\begin{theorem}\label{thm:prod}
	For all $i \in \{1, 2\}$, let ${\CurC}_i$ be an $(n_i, k_i, r, t_i)-$SLRC code over a field $\F_q$ with repair alternativity $a_i$ and let
	\begin{align*}
		N = n_1 n_2,
		&&
		K = k_1 k_2,
		&&
		T = t_1 t_2 + t_1 + t_2,
		&&
		A = a_1 + a_2.
	\end{align*}
	Then $\CurC_1 \Cpro \CurC_2$ is a $q$-ary $(N, K, r, T)$-SLRC with repair alternativity $\geq A$.
\end{theorem}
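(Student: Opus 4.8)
The plan is to prove the three claims — the length/dimension $(N,K)$, the recoverable-erasure count $T$, and the alternativity bound $A$ — separately, since they are of quite different character. The first is immediate from the definition of the product code: $\CurC_1 \otimes \CurC_2$ has length $n_1 n_2 = N$ and dimension $k_1 k_2 = K$ by the very definition of the product code via $G_1 \otimes G_2$. The locality claim (that the common locality $r$ is preserved) should follow from Lemma~\ref{lem:recvecs}: every coordinate of the product lies in the support of some recovery vector of weight $\leq r+1$ inherited from one of the factor codes, so every coordinate has an $(r, \CurC_1 \otimes \CurC_2)$-recovery set. So the real work is in the bounds on $T$ and on $A$.

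For the erasure-correction bound $T = t_1 t_2 + t_1 + t_2 = (t_1+1)(t_2+1) - 1$, the plan is to think of the coordinates of $\CurC_1 \otimes \CurC_2$ as an $n_1 \times n_2$ grid, where coordinate $(j-1)n_2 + m$ sits in row $j$, column $m$ (this is exactly the bijection of Lemma~\ref{prop:tenvec}). Lemma~\ref{lem:recvecs} tells us that a recovery vector of $\CurC_1$ for its coordinate $j$ lifts to a recovery vector for an entire grid-column's worth of coordinates — it recovers coordinate $(j,m)$ using only other coordinates in column $m$ — and symmetrically recovery vectors of $\CurC_2$ act within a single row. I would invoke Proposition~\ref{prop:SLRCequiv} and argue by contradiction: take any erasure set $E$ with $\abs{E} \leq T$ and suppose no coordinate in $E$ can be recovered. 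If $E$ meets more than $t_1$ rows in its column-projection or triggers the factor-SLRC property row-wise, then some factor code's sequential-recovery guarantee produces a recoverable coordinate. Concretely, I would argue that if $E$ has at most $T = (t_1+1)(t_2+1)-1$ erasures, then either some row contains at most $t_2$ erasures in a way that lets $\CurC_2$'s property fire, or the number of "heavily erased" rows is at most $t_1$, letting $\CurC_1$'s property fire along a surviving column; a counting/pigeonhole argument shows one of these must hold whenever $\abs{E} \leq (t_1+1)(t_2+1)-1$.

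For the alternativity bound $A \geq a_1 + a_2$, the plan is to fix a coordinate $i = (j-1)n_2 + m$ of the product code and exhibit at least $a_1 + a_2$ distinct supports of recovery vectors of weight $\leq r+1$. From Lemma~\ref{lem:recvecs}(b), each of the $\geq a_2$ distinct recovery-vector supports of $\CurC_2$ at coordinate $m$ gives a vector $e_j^{(1)} \otimes c^{(2)}$ whose support lies entirely within row $j$ of the grid; from Lemma~\ref{lem:recvecs}(a), each of the $\geq a_1$ distinct supports of $\CurC_1$ at coordinate $j$ gives a vector $c^{(1)} \otimes e_m^{(2)}$ whose support lies entirely within column $m$. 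These two families have disjoint supports except at the single shared cell $i$ itself, so a support coming from the row family can coincide with one from the column family only if both are trivial, which weight $\geq 2$ rules out; hence the $a_1$ column-type supports and the $a_2$ row-type supports are all distinct, giving $a(i) \geq a_1 + a_2$ for every $i$ and thus $A \geq a_1 + a_2$.

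**The hard part will be** the $T$-bound, specifically making the pigeonhole argument airtight: I must ensure that whenever $\abs{E} \leq (t_1+1)(t_2+1)-1$ there is genuinely a coordinate recoverable from $\overline{E}$, which requires carefully tracking how the factor SLRCs' guarantees (each only promising recovery when an erasure pattern restricted to a line has size $\leq t_i$) interact across the grid. The delicate point is that a factor code recovers a coordinate only using \emph{surviving} coordinates of that line, so I need the erasure pattern projected to some line to have size $\leq t_i$ \emph{and} to certify, via Proposition~\ref{prop:SLRCequiv} applied to that factor, that a recoverable coordinate exists there; balancing the two directions to cover all patterns of size up to $T$ is where the $(t_1+1)(t_2+1)-1$ threshold must be shown to be exactly sufficient.
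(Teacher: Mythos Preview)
Your proposal is correct and follows essentially the same route as the paper: the grid interpretation via Lemma~\ref{prop:tenvec}, lifting recovery vectors through Lemma~\ref{lem:recvecs}, invoking Proposition~\ref{prop:SLRCequiv} together with a pigeonhole count for the $T$-bound, and the row-versus-column support argument for the alternativity bound. Your two-case pigeonhole is in fact slightly more thorough than the paper's one-directional version, which only locates a block with at most $t_2$ erasures without explicitly treating the possibility that this block is erasure-free (in which case one must switch to the column direction, exactly as you outline); your instinct that this is the delicate step is well-founded.
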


\begin{proof}
	Consider the product code of two codes $\CurC_1$ and $\CurC_2$ with generator matrices $G_1$ and $G_2$. 
	The code length $N$ and dimension $K$ of the product code follow from its definition.
    Denote the minimum distances of $\CurC_i$ by $d_i$, for $i \in \{1, 2\}$. Then we must have $d_i \geq t_i + 1$. 
    The maximum number of recoverable erasures follows from Lemma~\ref{lem:proddist}, since the minimum distance of $\CurC_1\Cpro\CurC_2$ is at least
	\begin{align*}
		d_1 d_2 - 1 \geq (t_1 + 1) (t_2 + 1) - 1=t_1t_2+t_1+t_2.
	\end{align*}  
    We now turn to the sequential local recoverability. Let $E \subseteq [n_1 n_2]$ be an erasure pattern with $|E| \leq T = t_1 t_2 + t_1 + t_2$.
    Then at least one block of coordinates $\{(i - 1) n_2 + 1, \ldots, (i - 1) n_2 + n_2\}$, for some $i\in \{1, \ldots, n_1\}$, contains at most $t_2$ erasures, since otherwise
    \begin{align*}
        |E|
        \geq n_1 (t_2 + 1)
        \geq (t_1 + k_1) (t_2 + 1)
        = t_1 t_2 + t_1 + k_1 (t_2 + 1)
        >T.
    \end{align*}
    Since this block by itself corresponds to a copy of $\CurC_2$ (which is sequentially recoverable with locality $r$) and the number of erasures inside this block is at most $t_2$, there is a coordinate $x \in \{(i - 1) n_2 + 1, \ldots, (i - 1) n_2 + n_2\}$ for which a recovery vector $v^{(2)} \in \CurC_2^\perp$ of weight at most $r+1$ exists, by Proposition \ref{prop:SLRCequiv}.
    By Lemma \ref{lem:recvecs} we know that then $ e_i^{(1)} \otimes v^{(2)}$ is a recovery vector for coordinate $x$ in $\CurC_1 \otimes \CurC_2$ of weight at most $r + 1$. 
    Therefore, again using Proposition~\ref{prop:SLRCequiv}, we have shown that the sequential locality of $\CurC_1 \otimes \CurC_2$ is (at most) $r$. 

    
%
    It remains to show the lower bound on the alternativity of the product code. 
	For $m, m' \in [n_2]$ and $c^{(1)}, c^{'(1)} \in \CurC_1^\perp$, Lemma~\ref{prop:tenvec} tells us that $\supp(c^{(1)} \otimes e_m^{(2)}) = \supp(c^{'(1)} \otimes e_{m'}^{(2)})$ if and only if $m = m'$ and $\supp(c^{(1)}) = \supp(c^{'(1)})$, so vectors in ${(\CurC_1 \Cpro \CurC_2)}^\perp$ constructed in this manner are unique up to scalar multiples.
	Similar arguments hold for vectors of the form $e_j^{(1)} \otimes c^{(2)}$ where $j \in [n_1]$ and $c^{(2)} \in \CurC_2^\perp$.
    Furthermore, 
	\begin{align*}
		&\supp(c^{(1)} \otimes e_m^{(2)})
		\cap
		\supp(e_j^{(1)} \otimes \bar c^{(2)}) \\
        =&
		\{(j' - 1) n_2 + m \mid j' \in \supp(c^{(1)})\} 
        \cap
		\{(j - 1) n_2 + m' \mid m' \in \supp(c^{(2)})\} \\
		\subseteq&
		\{(j' - 1) n_2 + m \mid j' \in [n_1]\}  \cap
		\{(j - 1) n_2 + m' \mid k' \in [n_2]\} \\
		=&
        \{(j - 1) n_2 + m\}, 
	\end{align*}
    i.e., they intersect in at most one coordinate.
	Since both $\wt(c^{(1)}) \geq 2$ and $\wt(c^{(2)}) \geq 2$, we have $\wt(c^{(1)} \otimes e_m^{(2)}) \geq 2$ and $\wt(e_j^{(1)} \otimes c^{(2)}) \geq 2$, which shows that these recovery vectors of the product code are distinct.
	For each $i \in [n_1 n_2]$, we can uniquely write $i = (j - 1) n_2 + m$ where $j \in [n_1]$ and $m \in [n_2]$.
    We then get at least $a_1$ distinct $r$-recovery sets from $\CurC_1$ corresponding to the recovery vectors $c^{'(1)} \otimes e_k^{(2)}$, where we have at least $a_1$ choices for vectors $c^{'(1)} \in \CurC_1^\perp$ for recovering the $j^\text{th}$ coordinate in $[n_1]$.
    Similarly we have $a_2$ distinct $r$-recovery sets corresponding to the recovery vectors $e_j^{(1)} \otimes c^{(2)}$ coming from $\CurC_2$. 
	Overall, the number of $r$-recovery sets of each symbol in the product is at least the sum of the number of $r$-recovery sets from each $\CurC_i$.
\end{proof}

We note that the above proof applies only to the case of sequential local recovery, not parallel local recovery. This is because the argument concerning the erasure distribution ensures only that at least one block corresponding to $\CurC_2$ contains at most $t_2$ errors. In contrast, parallel recovery requires that all such blocks contain at most $t_2$ errors. An analogous issue arises for (spread out) blocks associated with $\CurC_1$, which would each need to contain no more than $t_1$ errors.

The following corollary follows by applying Theorem~\ref{thm:prod} iteratively.

\begin{corollary}\label{cor:prod}
	For all $i \in [\ell]$, let ${\CurC}_i$ be an $(n_i, k_i, r, t_i)-$SLRC code over a field $\F_q$ with repair alternativity $a_i$ and let
	\begin{align*}
		N_\ell = \prod_{i = 1}^\ell n_i,
		&&
		K_\ell = \prod_{i = 1}^\ell k_i,
		&&
		T_\ell = \left( \prod_{i = 1}^\ell (t_i + 1) \right) - 1,
		&&
		A_\ell = \sum_{i = 1}^{\ell} a_i.
	\end{align*}
	Then ${\CurC}_1 \Cpro \cdots \Cpro {\CurC}_{\ell}$ is a $q$-ary $(N_{\ell}, K_{\ell}, r, T_{\ell})$-SLRC with repair alternativity $A \geq A_\ell$ and rate $\frac{K_\ell}{N_\ell}$.
\end{corollary}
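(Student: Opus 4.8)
The plan is to prove Corollary~\ref{cor:prod} by induction on $\ell$, using Theorem~\ref{thm:prod} as both the base case and the inductive step, since the corollary is explicitly advertised as following by iterating the binary product result. The key observation that makes the induction go through cleanly is that the product code $\CurC_1 \Cpro \cdots \Cpro \CurC_\ell$ can be regrouped as $(\CurC_1 \Cpro \cdots \Cpro \CurC_{\ell - 1}) \Cpro \CurC_\ell$; this requires that the Kronecker product of generator matrices be associative, which follows from the associativity of the Kronecker product of matrices (a standard fact I would cite or note as immediate from the definition). Provided we can treat the $(\ell-1)$-fold product as a single SLRC with the parameters given by the inductive hypothesis, Theorem~\ref{thm:prod} applies directly to the two-factor product of this code with $\CurC_\ell$.

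First I would set up the induction. The base case $\ell = 1$ is trivial, and the case $\ell = 2$ is exactly Theorem~\ref{thm:prod} after observing that $T_2 = (t_1+1)(t_2+1) - 1 = t_1 t_2 + t_1 + t_2$ and $A_2 = a_1 + a_2$ match the theorem's values for $T$ and $A$. For the inductive step, assume the statement holds for $\ell - 1$ factors, so that $\CurD := \CurC_1 \Cpro \cdots \Cpro \CurC_{\ell-1}$ is an $(N_{\ell-1}, K_{\ell-1}, r, T_{\ell-1})$-SLRC with alternativity at least $A_{\ell-1}$. Then I would apply Theorem~\ref{thm:prod} to $\CurD \Cpro \CurC_\ell$. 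The resulting length and dimension are $N_{\ell-1} n_\ell = N_\ell$ and $K_{\ell-1} k_\ell = K_\ell$, which is immediate.

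Next I would verify the recursions for the erasure parameter and the alternativity. The locality remains $r$ because both $\CurD$ (by induction) and $\CurC_\ell$ share the common locality $r$, so Theorem~\ref{thm:prod} applies with that single value of $r$. For the maximum number of recoverable erasures, Theorem~\ref{thm:prod} gives
\begin{align*}
    T &= (T_{\ell-1} + 1)(t_\ell + 1) - 1 = \left( \prod_{i=1}^{\ell-1}(t_i + 1) \right)(t_\ell + 1) - 1 = \left( \prod_{i=1}^{\ell}(t_i + 1) \right) - 1 = T_\ell,
\end{align*}
where I have substituted the inductive value $T_{\ell-1} + 1 = \prod_{i=1}^{\ell-1}(t_i + 1)$ and then absorbed the final factor into the product. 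For the alternativity, Theorem~\ref{thm:prod} yields a lower bound of (alternativity of $\CurD$) $+$ (alternativity of $\CurC_\ell$) $\geq A_{\ell-1} + a_\ell = \sum_{i=1}^{\ell} a_i = A_\ell$, which completes the step. The rate is $K_\ell / N_\ell$ by the length and dimension computation.

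The one step warranting care, and the most likely obstacle, is ensuring that the inductive hypothesis legitimately hands Theorem~\ref{thm:prod} a code of the required form. Theorem~\ref{thm:prod} is stated for a product of two SLRCs with \emph{equal} locality $r$, and its proof uses the actual tensor structure of the two factors' recovery vectors (via Lemma~\ref{lem:recvecs}), not merely their parameters. Because the Kronecker product is associative, $\CurD \Cpro \CurC_\ell$ is genuinely a two-factor product code in the sense of the product-code definition, so Theorem~\ref{thm:prod} applies verbatim; the only subtlety is that Theorem~\ref{thm:prod} gives a lower bound "$\geq A$" on alternativity rather than an equality, so I would phrase the inductive hypothesis as a lower bound throughout and note that monotonicity of the bound preserves it under composition. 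I would flag that the common locality $r$ being shared by all $\ell$ factors is exactly what allows a single application at each stage, and I would make the associativity of $\Cpro$ explicit (either as a remark or by citing the matrix identity $(A \otimes B) \otimes C = A \otimes (B \otimes C)$) so that the regrouping is fully justified.
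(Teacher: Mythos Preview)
Your proposal is correct and matches the paper's approach exactly: the paper states that the corollary ``follows by applying Theorem~\ref{thm:prod} iteratively,'' and your induction on $\ell$ with the associativity of the Kronecker product is precisely that iteration made explicit. The paper also offers an alternative self-contained geometric argument in Appendix~\ref{app:geoprod}, but your route is the intended main-text proof.
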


We can consider the product of $\ell$ codes in a geometric sense by viewing the resulting product code as an $n_1 \times \cdots \times n_\ell$ hyperrectangle whose grid points represent the code's $n_1 \cdots n_\ell$ symbols.
Then, given some index $i \in [n_1 \cdots n_\ell]$, the $\ell$ orthogonal lines within this hyperrectangle containing that corresponding point correspond to the codes $\CurC_1, \ldots \CurC_\ell$ and any other $r$ symbols on each line can then be used as a recovery set for $i$.
For example, Figure~\ref{figure:prod} shows a product between an $n_1$ length code (represented by the horizontal lines) and an $n_2$ length code (represented by the vertical lines).
Any erased vertex can then be recovered from a subset of vertices sharing either a vertical or horizontal line with it.

\begin{figure}[ht]
	\begin{center}
		\begin{tikzpicture}[scale=0.75]
			\GraphInit[vstyle=Classic]
			\SetGraphUnit{1}
			\SetUpVertex[Math, MinSize=1pt]

			\def\s{1.0} 

			\Vertex[x={\s*0}, y={\s*5}, L={1}, Lpos=180]{a04e}
			\Vertex[x={\s*0}, y={\s*4}, L={2}, Lpos=180]{a03}
			\Vertex[x={\s*0}, y={\s*3}, L={3}, Lpos=180]{a02}
			\Vertex[x={\s*0}, y={\s*1}, L={n_2 - 1}, Lpos=180]{a01}
			\Vertex[x={\s*0}, y={\s*0}, L={n_2}, Lpos=180]{a00}

			\Vertex[x={\s*0}, y={\s*5}, L={1}, Lpos=90]{a04}
			\Vertex[x={\s*1}, y={\s*5}, L={2}, Lpos=90]{a14}
			\Vertex[x={\s*2}, y={\s*5}, L={3}, Lpos=90]{a24}
			\Vertex[x={\s*4}, y={\s*5}, L={n_1 - 1}, Lpos=90]{a34}
			\Vertex[x={\s*5}, y={\s*5}, L={n_1}, Lpos=90]{a44}

			\Vertex[x={\s*0}, y={\s*4}, NoLabel]{a03}
			\Vertex[x={\s*1}, y={\s*4}, NoLabel]{a13}
			\Vertex[x={\s*2}, y={\s*4}, NoLabel]{a23}
			\Vertex[x={\s*4}, y={\s*4}, NoLabel]{a33}
			\Vertex[x={\s*5}, y={\s*4}, NoLabel]{a43}

			\Vertex[x={\s*0}, y={\s*3}, NoLabel]{a02}
			\Vertex[x={\s*1}, y={\s*3}, NoLabel]{a12}
			\Vertex[x={\s*2}, y={\s*3}, NoLabel]{a22}
			\Vertex[x={\s*4}, y={\s*3}, NoLabel]{a32}
			\Vertex[x={\s*5}, y={\s*3}, NoLabel]{a42}

			\Vertex[x={\s*0}, y={\s*1}, NoLabel]{a01}
			\Vertex[x={\s*1}, y={\s*1}, NoLabel]{a11}
			\Vertex[x={\s*2}, y={\s*1}, NoLabel]{a21}
			\Vertex[x={\s*4}, y={\s*1}, NoLabel]{a31}
			\Vertex[x={\s*5}, y={\s*1}, NoLabel]{a41}

			\Vertex[x={\s*0}, y={\s*0}, NoLabel]{a00}
			\Vertex[x={\s*1}, y={\s*0}, NoLabel]{a10}
			\Vertex[x={\s*2}, y={\s*0}, NoLabel]{a20}
			\Vertex[x={\s*4}, y={\s*0}, NoLabel]{a30}
			\Vertex[x={\s*5}, y={\s*0}, NoLabel]{a40}

			\tikzstyle{EdgeStyle}=[solid, color=black]
			\Edges(a00, a10, a20)
			\Edges(a01, a11, a21)
			\Edges(a02, a12, a22)
			\Edges(a03, a13, a23)
			\Edges(a04, a14, a24)

			\Edges(a30, a40)
			\Edges(a31, a41)
			\Edges(a32, a42)
			\Edges(a33, a43)
			\Edges(a34, a44)

			\tikzstyle{EdgeStyle}=[solid, color=red]
			\Edges(a02, a03, a04)
			\Edges(a12, a13, a14)
			\Edges(a22, a23, a24)
			\Edges(a32, a33, a34)
			\Edges(a42, a43, a44)

			\Edges(a00, a01)
			\Edges(a10, a11)
			\Edges(a20, a21)
			\Edges(a30, a31)
			\Edges(a40, a41)

			\tikzstyle{EdgeStyle}=[dashed, color=black]
			\Edges(a20, a30)
			\Edges(a21, a31)
			\Edges(a22, a32)
			\Edges(a23, a33)
			\Edges(a24, a34)

			\tikzstyle{EdgeStyle}=[dashed, color=red]
			\Edges(a01, a02)
			\Edges(a11, a12)
			\Edges(a21, a22)
			\Edges(a31, a32)
			\Edges(a41, a42)
		\end{tikzpicture}
	\end{center}
	\caption{\centering{The product of an $n_1$ length code with an $n_2$ length code.}}\label{figure:prod}
\end{figure}

For a geometric proof of Corollary~\ref{cor:prod}, see Appendix~\ref{app:geoprod}.

\begin{remark}
The construction of recovery vectors for the product code from those of the component codes naturally leads to an algorithmic procedure for erasure recovery. Given an erasure pattern, one identifies projections onto component codes---corresponding to lines in the ambient hyperrectangle---such that the erased coordinates along these lines can be recovered locally via the associated recovery vectors. This process is applied iteratively, updating the erasure pattern after each successful recovery step, until all erasures are resolved.
\end{remark}





By choosing codes of the same dimension, we can show that the alternativity can be explicitly computed (instead of just being subject to a lower bound).
First we need the following lemma.

\begin{lemma}\label{lemma:linind2}
	Let $r$, $k$, $k'$, $n$, and $n'$ be positive integers such that $1 < r \leq k < n$ and $1 < r \leq k' < n'$.
	Let $C$ be a $k \times n$ matrix and let $D$ be an $k' \times n'$ matrix where both matrices have the property that any $r$ columns are linearly independent.
	Let $\{(c_1, d_1), \ldots, (c_{r + 1}, d_{r + 1})\}$ be a set of pairs of columns of $C$ and $D$ such that for all $i, i' \in [r + 1]$ where $i \neq i'$, both $c_i \neq c_{i'}$ and $d_i \neq d_{i'}$.
	Then the set of vectors $\{c_1 \otimes d_1, \ldots, c_{r + 1} \otimes d_{r + 1}\}$ are linearly independent.
\end{lemma}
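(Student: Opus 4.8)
The plan is to reformulate the statement in terms of rank-one matrices and then exploit the ``any $r$ columns are independent'' hypothesis through a probing argument. First I would identify each $c_i \otimes d_i \in \F_q^{kk'}$ with the rank-one matrix $c_i d_i^{\top} \in \F_q^{k \times k'}$; since vectorization is a linear isomorphism, a dependence $\sum_{i=1}^{r+1} \lambda_i (c_i \otimes d_i) = 0$ is equivalent to the matrix identity $\sum_{i=1}^{r+1} \lambda_i c_i d_i^{\top} = 0$, and it suffices to show this forces every $\lambda_i = 0$.

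Next I would probe this identity on the right by an arbitrary vector $u \in \F_q^{k'}$, obtaining
\begin{align*}
	\sum_{i=1}^{r+1} \bigl( \lambda_i\, d_i^{\top} u \bigr)\, c_i = 0 .
\end{align*}
Thus for every $u$ the tuple $\bigl( \lambda_i\, d_i^{\top} u \bigr)_i$ is a linear dependence among the columns $c_1, \ldots, c_{r+1}$, whose structure is controlled by the rank of $\{c_1, \ldots, c_{r+1}\}$. By hypothesis this rank is at least $r$ (any $r$ of the columns are independent) and at most $r+1$. If it equals $r+1$ the only dependence is trivial, so $\lambda_i\, d_i^{\top} u = 0$ for all $i$ and all $u$; choosing $u$ with $d_i^{\top} u \neq 0$ (possible since each $d_i \neq 0$) yields $\lambda_i = 0$ at once.

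The real work is the case of rank exactly $r$, which I expect to be the main obstacle. Here the space of dependences among the $c_i$ is one-dimensional, spanned by some $(\mu_1, \ldots, \mu_{r+1})$, and crucially every $\mu_i \neq 0$: a vanishing coordinate would give a nontrivial dependence among only $r$ columns, contradicting their independence. Hence $\bigl( \lambda_i\, d_i^{\top} u \bigr)_i$ is a scalar multiple of $(\mu_i)_i$ for each $u$, and tracking the (linear) dependence on $u$ I would produce a single vector $a \in \F_q^{k'}$ with $\lambda_i d_i = \mu_i\, a$ for all $i$. If $a = 0$ then each $\lambda_i d_i = 0$, so $\lambda_i = 0$. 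If instead $a \neq 0$, then no $\lambda_i$ can vanish (as $\mu_i \neq 0$), forcing every $d_i$ to be a nonzero multiple of $a$ and hence the $d_i$ to be pairwise parallel. This is exactly where the hypothesis $r > 1$ is indispensable: since $r \geq 2$, any two of $d_1, \ldots, d_{r+1}$ lie inside some set of $r$ columns and are therefore independent, contradicting the parallelism. Thus $a = 0$ and all $\lambda_i = 0$. Note that the independence hypotheses on $C$ and on $D$ enter asymmetrically (the former shapes the syzygy $(\mu_i)$, the latter delivers the final contradiction); probing on the left instead would simply swap their roles.
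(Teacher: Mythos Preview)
Your argument is correct and genuinely different from the paper's. The paper proceeds by an explicit change of basis: after disposing of the case where one of the column families already has rank $r+1$, it picks invertible matrices $A$, $B$ sending $c_1,\ldots,c_r$ and $d_1,\ldots,d_r$ to standard basis vectors, so that $(A\otimes B)(c_i\otimes d_i)$ for $i\le r$ are $r$ distinct standard basis vectors, while $(A\otimes B)(c_{r+1}\otimes d_{r+1})$ has $r^2$ nonzero entries; the inequality $r^2>r$ (this is where $r>1$ enters) then finishes via a support comparison. Your route instead interprets $c_i\otimes d_i$ as the rank-one matrix $c_i d_i^{\top}$, probes a putative dependence by right-multiplication, and exploits that the syzygy space of $c_1,\ldots,c_{r+1}$ is at most one-dimensional with fully supported generator, reducing to the impossibility of all $d_i$ being parallel. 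Your approach is more coordinate-free and makes the asymmetric roles of the two hypotheses transparent (as you note), while the paper's basis-normalization is more hands-on and yields the result by an immediate support count; both are short, and both invoke $r>1$ at the final step, just in different guises ($r^2>r$ versus ``two of the $d_i$ must be independent'').
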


\begin{proof} 
    If all $r + 1$ vectors in either the set $\{c_1, \ldots, c_{r + 1}\}$ or the set $\{d_1, \ldots, d_{r + 1}\}$ are already linearly independent, then so are the vectors in $\{c_1 \otimes d_1, \ldots, c_{r + 1} \otimes d_{r + 1}\}$ and we are done.
    
    So, assume that $c_{r + 1}$ and $d_{r + 1}$ are respectively linear combinations of $c_1, \ldots, c_r$ and $d_1, \ldots, d_r$.
    It follows by the linear independence of any $r$ columns of $C$ or $D$ that these sums must necessarily include all $r$ other vectors, i.e.,
    \begin{align*}
        c_{r + 1} = \sum_{i = 1}^r a_i c_i, \qquad d_{r + 1} = \sum_{i = 1}^r b_i d_i,
    \end{align*}
    for $a_i, b_i \in \F_q \setminus \{0\}$ for all $i \in [r]$.
	Now, let $A$ and $B$ be invertible $k \times k$ matrices such that for all $1 \leq i \leq r$, $A c_i = e_i \in \F_q^k$, $B d_i = e_i \in \F_q^{k'}$.
    It follows that
	\begin{align*}
		A c_{r + 1}
		&= \begin{bmatrix}
			\smash{\underbrace{\begin{matrix} a_1 & \cdots & a_r \end{matrix}}_{r}}
			&
			\smash{\underbrace{\begin{matrix} 0 & \cdots & 0 \end{matrix}}_{k - r}}
		\end{bmatrix}^\top,
		\vphantom{= \begin{bmatrix}
			\underbrace{\begin{matrix} a_1 & \cdots & a_r \end{matrix}}_{r}
			&
			\underbrace{\begin{matrix} 0 & \cdots & 0 \end{matrix}}_{k - r}
		\end{bmatrix}^\top,}
		\\
		B d_{r + 1}
		&= \begin{bmatrix}
			\smash{\underbrace{\begin{matrix} b_1 & \cdots & b_r \end{matrix}}_{r}}
			&
			\smash{\underbrace{\begin{matrix} 0 & \cdots & 0 \end{matrix}}_{k' - r}}
		\end{bmatrix}^\top.
		\vphantom{= \begin{bmatrix}
			\underbrace{\begin{matrix} b_1 & \cdots & b_r \end{matrix}}_{r}
			&
			\underbrace{\begin{matrix} 0 & \cdots & 0 \end{matrix}}_{k' - r}
		\end{bmatrix}^\top.}
	\end{align*}
	Then for all $1 \leq i \leq r$,
	\begin{align*}
		(A \otimes B) (c_i \otimes d_i)
		= (A c_i) \otimes (B d_i)
		= e_{(i - 1) r + i} \in \F_q^{kk'}, \\
		(A \otimes B) (c_{r + 1} \otimes d_{r + 1})
		= \begin{bmatrix}
			\smash{\underbrace{\begin{matrix} a_1 b_1 & \cdots & a_r b_r \end{matrix}}_{r^2}}
			&
			\smash{\underbrace{\begin{matrix} 0 & \cdots & 0 \end{matrix}}_{k k' - r^2}}
		\end{bmatrix}^\top \in \F_q^{kk'},
		\vphantom{= \begin{bmatrix}
			\underbrace{\begin{matrix} 1 & \cdots & 1 \end{matrix}}_{r^2}
			&
			\underbrace{\begin{matrix} 0 & \cdots & 0 \end{matrix}}_{n n' - r^2}
		\end{bmatrix}^\top \in \F_q^{kk'},}
	\end{align*}
    where the $r^2$ values $a_1 b_1, \ldots, a_r b_r$ will be nonzero.
	Since
	\begin{align*}
		\abs{\cup_{i \in [r]} \supp((A \otimes B) (c_i \otimes d_i))} = r < r^2 = \abs{\supp((A \otimes B) (c_{r + 1} \otimes d_{r + 1}))},
	\end{align*}
	these $r + 1$ vectors $\{(A \otimes B)(c_i \otimes d_i)\}_{i \in [r + 1]}$ in $\F_q^{kk'}$ are all linearly independent and so then too are the vectors ${\{c_i \otimes d_i\}}_{i \in [r + 1]}$ because $A \otimes B$ is full rank.
\end{proof}

\begin{theorem}\label{thm:mdsalt}
	Given some fixed $r > 1$, if for all $i \in [\ell]$, $\CurC_i$ is a $[n_i, k]$ linear code over $\F_q$ with locality $r$ and repair alternativity $a_i$, then the product $\CurC_1 \Cpro \cdots \Cpro \CurC_\ell$ is an SLRC with locality $r$, with repair alternativity $\sum_{i = 1}^\ell a_i$, and whose dual code's minimum distance is at most $r + 1$.
\end{theorem}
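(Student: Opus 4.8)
The claims that $\CurC_1 \Cpro \cdots \Cpro \CurC_\ell$ is an SLRC with locality $r$ and repair alternativity \emph{at least} $\sum_i a_i$ are immediate from Corollary~\ref{cor:prod} (each $\CurC_i$ being at least an $(n_i,k,r,1)$-SLRC), and since locality $r$ means every coordinate carries a dual codeword of weight at most $r+1$, the dual distance bound is automatic. Hence the entire content of the theorem is the reverse inequality: the alternativity is at most $\sum_i a_i$. I would prove this by induction on $\ell$, so the core is the two-fold case $\CurC_1 \Cpro \CurC_2$; I identify its coordinates with the grid $[n_1] \times [n_2]$, a recovery vector for a fixed coordinate being a weight-$\leq r+1$ codeword of the dual whose support contains it.

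The heart of the argument is a structural claim: every weight-$\leq r+1$ codeword $v$ of $(\CurC_1 \Cpro \CurC_2)^\perp$ has support contained in a single grid row $\{j_0\} \times [n_2]$ or a single grid column $[n_1] \times \{m_0\}$. To prove it, write the codeword relation $\sum_{(j,m)} v_{j,m}\,(g_j^{(1)} \Cpro g_m^{(2)}) = 0$ as $\sum_{j \in J} g_j^{(1)} \Cpro h_j = 0$, where $g_j^{(i)}$ is the $j$th column of a generator matrix $G_i$, $J$ is the set of rows met by $\supp(v)$, and $h_j = \sum_m v_{j,m} g_m^{(2)}$. Via Lemma~\ref{lemma:linind1}, locality $r$ forces $\dist(\CurC_i^\perp) = r+1$, so any $r$ columns of each $G_i$ are linearly independent. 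If $2 \leq |J| \leq r$ the vectors $\{g_j^{(1)}\}_{j \in J}$ are independent, forcing every $h_j = 0$; then each nonzero row of $v$ lies in $\CurC_2^\perp$, hence has weight $\geq r+1$, and with $\geq 2$ such rows $\wt(v) \geq 2(r+1) > r+1$, a contradiction. The symmetric argument (grouping by columns) rules out $2 \leq |M| \leq r$, where $M$ is the set of columns met. The only remaining possibility is $|J|,|M| \in \{1, r+1\}$; when $|J| = |M| = r+1$ the support is exactly a set of $r+1$ columns $g_{j_i}^{(1)} \Cpro g_{m_i}^{(2)}$ with the $j_i$ pairwise distinct and the $m_i$ pairwise distinct, which Lemma~\ref{lemma:linind2} declares linearly independent---contradicting that they form a dependency. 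Thus only $|J| = 1$ (a single row) or $|M| = 1$ (a single column) survive.

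With the structural claim in hand, I would finish by counting at a fixed coordinate $(j_0,m_0)$. A row-confined recovery vector restricts to a weight-$\leq r+1$ element of $\CurC_2^\perp$ containing $m_0$ in its support, contributing exactly $a_2(m_0)$ distinct supports; symmetrically the column-confined ones contribute $a_1(j_0)$. Since every recovery vector has weight $\geq 2$, a row-confined support (spanning $\geq 2$ columns of row $j_0$) can never coincide with a column-confined support (spanning $\geq 2$ rows of column $m_0$), so the two families are disjoint and $(j_0,m_0)$ has exactly $a_1(j_0) + a_2(m_0)$ recovery supports. Minimizing over coordinates yields alternativity $a_1 + a_2$, and the inductive step applies the two-fold result to $(\CurC_1 \Cpro \cdots \Cpro \CurC_{\ell-1}) \Cpro \CurC_\ell$, the inner product again having dual distance $r+1$ (hence the required column-independence) by the induction hypothesis.

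The main obstacle is the structural claim: pinning down why weight-$\leq r+1$ dual codewords cannot spread across several rows and columns simultaneously. This is precisely where Lemma~\ref{lemma:linind2} and the \emph{tightness} $\dist(\CurC_i^\perp) = r+1$ (forcing any $r$ columns of $G_i$ to be independent) are essential---without the tightness, genuinely low-weight dual words of the components could combine into new mixed recovery vectors and the count would exceed $\sum_i a_i$.
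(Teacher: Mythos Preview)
Your proof follows essentially the same route as the paper's: induct on $\ell$, establish the structural claim that every weight-$\leq r+1$ dual codeword of the two-fold product is confined to a single grid row or column, and dispatch the extremal ``diagonal'' case via Lemma~\ref{lemma:linind2}. The paper reaches the all-rows-and-columns-distinct conclusion by projecting $p$ into $\CurC^\perp$ and $\CurD^\perp$ and arguing on weights, whereas your case split on $|J|$ and $|M|$ is a cleaner packaging of the same idea; the substance is identical.

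One correction worth flagging: the assertion ``locality $r$ forces $\dist(\CurC_i^\perp) = r+1$'' gives only the upper bound $\leq r+1$. The lower bound $\geq r+1$---equivalently, that any $r$ columns of $G_i$ are linearly independent---is exactly what your $2 \leq |J| \leq r$ step and Lemma~\ref{lemma:linind2} require, and the paper invokes it through Lemma~\ref{lemma:linind1} without separate justification as well. So you have correctly isolated the hinge of the argument, but locality alone does not supply it; it is an implicit extra hypothesis that both proofs carry through the induction.
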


\begin{proof}
	Proceed by induction on $\ell$.
	This is trivial for $\ell = 1$.
	Assume $\CurC = \CurC_1 \Cpro \cdots \Cpro \CurC_{\ell - 1}$ is an $(n, k^{\ell - 1}, r, t)$-SLRC with $d(\CurC^\perp) \leq r + 1$, locality $r$, repair alternativity $a$, and generator matrix $C$.
	Let $\CurD$ be a $[n', k']$ linear code with locality $r$, repair alternativity $a'$, and generator matrix $D$.

	Now consider the product code $\CurC \Cpro \CurD$ with generator matrix $C \otimes D$.
	From the proof of Theorem~\ref{thm:prod}, we know that for each $i \in [n n']$, there are at least $(a + a')(q - 1)$ parity checks in the product of the form $\lambda (c \otimes e_j)$ or $\lambda (e_i \otimes d)$ where $\lambda \in \F_q^\times$, $c \in \CurC^\perp$ with $\wt(c) \leq r + 1$, $d \in \CurD^\perp$ with $\wt(d) = r + 1$, and $e_i \in \F_q^n$ and $e_j \in \F_q^{n'}$ are standard basis vectors.
	Assume that there exists $p \in {(\CurC \Cpro \CurD)}^\perp$ with $1 \leq \wt(p) \leq r + 1$ which is not of the form above.
	Then $(C \otimes D) p^\top = 0$.
	Write
	\begin{align*}
		p = (p_{11}, \ldots, p_{1n'}, p_{2,1}, \ldots, p_{n1}, \ldots, p_{n n'})
	\end{align*}
	and let 
	\begin{align*}
		P = \{(i, j) \in [n] \times [n'] \mid p_{ij} \neq 0\}.
	\end{align*}
	Note that because $p$ is not of the form $\lambda (c \otimes e_j)$ or $\lambda (e_i \otimes d)$, neither all of the $i$s nor all of the $j$s can take the same value.

	If $\gamma \in [k k']$ is the index of a row in the generator matrix of the product code, we can write $\gamma = (\alpha - 1) k' + \beta$ for $\alpha \in [k]$ and $\beta \in [k']$ and the $\gamma^\text{th}$ row of $(C \otimes D) p^\top$ is equal to 
	\begin{align*}
		0
		= \sum_{i = 1}^n c_{\alpha i} \sum_{j = 1}^{n'} d_{\beta j} p_{ij} 
		= \sum_{j = 1}^{n'} d_{\beta j} \sum_{i = 1}^{n} c_{\alpha i} p_{ij}.
	\end{align*}
	We then have that for all $\alpha \in [k]$ and all $\beta \in [k']$
	\begin{align*}
		\left(
			\sum_{j = 1}^{n'} d_{\beta j} p_{1 j},
			\ldots,
			\sum_{j = 1}^{n'} d_{\beta j} p_{n j}
		\right)
		&\in \CurC^\perp, \\
		\left(
			\sum_{i = 1}^{n} c_{\alpha i} p_{i 1},
			\ldots,
			\sum_{i = 1}^{n} c_{\alpha i} p_{i n'}
		\right)
		&\in \CurD^\perp.
	\end{align*}
	Because we have at least two distinct values of $i$ and $j$ in $P$, each sum has less than $r + 1$ nonzero terms and thus these vectors are never all zeros.
	Furthermore, because the minimum distance of both $\CurC^\perp$ and $\CurD^\perp$ are at most $r + 1$, it follows that all values of $i$ and $j$ in $P$ must be distinct.
	In other words, for all distinct pairs $(i, j), (i', j') \in P$, both $i \neq i'$ and $j \neq j'$.
	Let ${\{(c_i, d_j)\}}_{(i, j) \in P}$ be pairs of the appropriate $i^\text{th}$ and $j^\text{th}$ columns of $C$ and $D$ and apply Lemma~\ref{lemma:linind1} and Lemma~\ref{lemma:linind2}.
	It then follows that $(C \otimes D) p^\top \neq 0$ because $p$ is nonzero and the columns of $C \otimes D$ corresponding to the nonzero entries of $p$ are all linearly independent.
	This is a contradiction.
\end{proof}


\section[Product Codes from MDS and BCH Codes]{Product Codes from MDS and BCH Codes}\label{sec:examples}

In this section, we derive several examples to demonstrate the achievable parameters of our construction by using MDS and BCH codes as the building blocks for the product code. 

For this let $\CurP_q$ be the ${[q + 1, 2, q]}_q$ MDS code, $\CurD_{q,n}$ be the ${[n, n - 1, 2]}_q$ MDS code, and $\CurR_{q,n,k}$ be the ${[n, k, n - k + 1]}_q$ Reed-Solomon code.
Furthermore, denote by $\CurB_{q, n, d}$ the BCH code over $\F_q$ with length $n$ and designed distance $d$.
Finally, for some code $\CurC$, let $\CurC^{*S}$ denote the code $\CurC$ punctured at indices $S$.

The achievable parameters are presented in the tables below, where we omit certain product codes when a superior code exists for the same values of \( q \) and \( k \). By "superior," we refer to codes that exhibit one or more of the following characteristics: larger \( \ell \), smaller length \( n \), smaller locality \( r \), or a higher number of recoverable erasures \( t \). For instance, in \( q = 5 \), we omit the \( (16,6,3,5) \)-SLRC because the \( (15,6,3,5) \)-SLRC achieves the same parameters with a smaller length and lower alternativity.

Additionally, we omit constructions from codes whose dual codes have minimum distance 2 since in these cases both Theorem~\ref{thm:mdsalt} no longer applies and, in most cases given the parameters of the resulting product codes, optimal codes are known to exist~\cite{GXY19}.

Table~\ref{table:q3} gives parameter sets for products of MDS and (possibly punctured) BCH codes over $\F_3$ up to $k = 10$ and these are additionally displayed in Figure~\ref{chart:q3}.
Table~\ref{table:q5k46}, Table~\ref{table:q5k8}, and Table~\ref{table:q5k910} give parameter sets of products of MDS and (possibly punctured) BCH codes over $\F_5$ up to $k = 10$; a chart of these resulting codes is given in Figure~\ref{chart:q5}.

\begin{table}[ht]
	\centering{}
	\begin{tabular}{rrrrlrrl}
		\toprule{}
		$q$ & $k$ & $r$ & $\ell$ & Code & $t$ & $a$ & Construction \\
		\midrule{}
		$3$ & $4$ & $2$ & $2$
		   & $[16,  4,  9]$ &  8 & 6 & $\CurP_3		\Cpro \CurP_3$	 \\
		&&&& $[12,  4,  6]$ &  5 & 4 & $\CurP_3		\Cpro \CurD_{3,3}$ \\
		&&&& $[9,   4,  4]$ &  3 & 2 & $\CurD_{3,3}	\Cpro \CurD_{3,3}$ \\
		\midrule{}
		$3$ & $6$ & $3$ & $2$
		   & $[16,  6,  6]$ &  5 & 4 & $\CurP_3		\Cpro \CurD_{3,4}$ \\
		&&&& $[12,  6,  4]$ &  3 & 2 & $\CurD_{3,4}	\Cpro \CurD_{3,3}$ \\
		\midrule{}
		$3$ & $8$ & $4$ & $2$
		   & $[20,  8,  6]$ &  5 & 4 & $\CurP_3	   \Cpro \CurD_{3,5}$ \\
		&&&& $[15,  8,  4]$ &  3 & 2 & $\CurD_{3,5}   \Cpro \CurD_{3,3}$ \\
		\midrule{}
		$3$ & $8$ & $3$ & $2$
		   & $[32,  8, 12]$ & 11 & 8 & $\CurP_3	   \Cpro \CurB_{3,8,4}$ \\
		&&&& $[28,  8,  9]$ &  8 & 5 & $\CurP_3	   \Cpro \CurB_{3,8,4}^{*\{1\}}$ \\
		&&&& $[24,  8,  8]$ &  7 & 6 & $\CurD_{3,3}   \Cpro \CurB_{3,8,4}$ \\
		&&&& $[21,  8,  6]$ &  5 & 3 & $\CurD_{3,3}   \Cpro \CurB_{3,8,4}^{*\{1\}}$ \\
		&&&& $[18,  8,  4]$ &  3 & 3 & $\CurD_{3,3}   \Cpro \CurB_{3,8,4}^{*\{1,5\}}$ \\
		\midrule{}
		$3$ & $8$ & $2$ & $3$
		   & $[64,  8, 27]$ & 26 & 9 & $\CurP_3	   \Cpro \CurP_3	 \Cpro \CurP_3$	 \\
		&&&& $[48,  8, 18]$ & 17 & 7 & $\CurP_3	   \Cpro \CurP_3	 \Cpro \CurD_{3,3}$ \\
		&&&& $[36,  8, 12]$ & 11 & 5 & $\CurP_3	   \Cpro \CurD_{3,3} \Cpro \CurD_{3,3}$ \\
		&&&& $[27,  8,  8]$ &  7 & 3 & $\CurD_{3,3}   \Cpro \CurD_{3,3} \Cpro \CurD_{3,3}$ \\
		\midrule{}
		$3$ & $9$ & $3$ & $2$
		   & $[16,  9,  4]$ &  3 & 2 & $\CurD_{3,4}   \Cpro \CurD_{3,4}$ \\
		\midrule{}
		$3$ & $10$ & $5$ & $2$
		   & $[24, 10,  6]$ &  5 & 4 & $\CurP_3	   \Cpro \CurD_{3,6}$ \\
		&&&& $[18, 10,  4]$ &  3 & 2 & $\CurD_{3,6}   \Cpro \CurD_{3,3}$ \\
		\bottomrule{}
	\end{tabular}
	\caption{\centering{Parameters of constructions using $q = 3, k \leq 10$}}\label{table:q3}
\end{table}

\begin{figure}[ht]
	\centering{}
	\includegraphics[width=6cm]{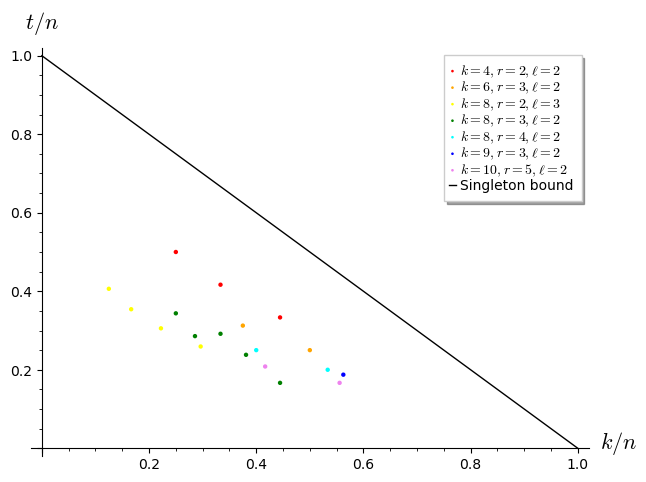}
	\caption{Product codes for $q = 3, k \leq 10$}\label{chart:q3}
\end{figure}

\begin{table}[ht]
	\centering{}
	\begin{tabular}{rrrrlrrl}
		\toprule{}
		$q$ & $k$ & $r$ & $\ell$ & Code & $t$ & $a$ & Construction \\
		\midrule{}
		$5$ & $4$ & $2$ & $2$
		   & $[36, 4, 25]$ & 24 & 20 & $\CurP_5	   \Cpro \CurP_5$	   \\
		&&&& $[30, 4, 20]$ & 19 & 16 & $\CurP_5	   \Cpro \CurR_{5,5,2}$ \\
		&&&& $[25, 4, 16]$ & 15 & 12 & $\CurR_{5,5,2} \Cpro \CurR_{5,5,2}$ \\
		&&&& $[24, 4, 15]$ & 14 & 13 & $\CurP_5	   \Cpro \CurR_{5,4,2}$ \\
		&&&& $[20, 4, 12]$ & 11 &  9 & $\CurR_{5,5,2} \Cpro \CurR_{5,4,2}$ \\
		&&&& $[18, 4, 10]$ &  9 & 11 & $\CurP_5	   \Cpro \CurD_{5,3}$   \\
		&&&& $[16, 4,  9]$ &  8 &  6 & $\CurR_{5,4,2} \Cpro \CurR_{5,4,2}$ \\
		&&&& $[15, 4,  8]$ &  7 &  7 & $\CurR_{5,5,2} \Cpro \CurD_{5,3}$   \\
		&&&& $[12, 4,  6]$ &  5 &  4 & $\CurR_{5,4,2} \Cpro \CurD_{5,3}$   \\
		&&&& $[ 9, 4,  4]$ &  3 &  2 & $\CurD_{5,3}   \Cpro \CurD_{5,3}$   \\
		\midrule{}
		$5$ & $6$ & $3$ & $2$
		   & $[30, 6, 15]$ & 14 & 14 & $\CurP_5	   \Cpro \CurR_{5,5,3}$ \\
		&&&& $[25, 6, 12]$ & 11 & 10 & $\CurR_{5,5,3} \Cpro \CurR_{5,5,2}$ \\
		&&&& $[24, 6, 10]$ &  9 & 11 & $\CurP_5	   \Cpro \CurD_{5,4}$   \\
		&&&& $[20, 6,  9]$ &  8 &  7 & $\CurR_{5,5,3} \Cpro \CurR_{5,4,2}$ \\
		&&&& $[15, 6,  6]$ &  5 &  5 & $\CurR_{5,5,3} \Cpro \CurD_{5,3}$   \\
		&&&& $[12, 6,  4]$ &  3 &  2 & $\CurD_{5,4}   \Cpro \CurD_{5,3}$   \\
		\bottomrule{}
	\end{tabular}
	\caption{\centering{Parameters of constructions using $q = 5, k \in \{4, 6\}$}}\label{table:q5k46}
\end{table}

\begin{table}[ht]
	\centering{}
	\begin{tabular}{rrrrlrrl}
		\toprule{}
		$q$ & $k$ & $r$ & $\ell$ & Code & $t$ & $a$ & Construction \\
		\midrule{}
		$5$ & $8$ & $4$ & $2$
		   & $[ 30, 8, 10]$ &  9 & 11 & $\CurP_5	   \Cpro \CurD_{5,5}$ \\
		&&&& $[ 25, 8,  8]$ &  7 &  7 & $\CurR_{5,5,2} \Cpro \CurD_{5,5}$ \\
		&&&& $[ 20, 8,  6]$ &  5 &  4 & $\CurR_{5,4,2} \Cpro \CurD_{5,5}$ \\
		&&&& $[ 15, 8,  4]$ &  3 &  2 & $\CurD_{5,5}   \Cpro \CurD_{5,3}$ \\
		\midrule{}
		$5$ & $8$ & $2$ & $3$
		   & $[100, 8, 48]$ & 47 & 15 & $\CurR_{5,5,2} \Cpro \CurR_{5,5,2} \Cpro \CurR_{5,4,2}$ \\
		&&&& $[ 96, 8, 45]$ & 44 & 19 & $\CurP_5	   \Cpro \CurR_{5,5,2} \Cpro \CurR_{5,4,2}$ \\
		&&&& $[ 90, 8, 40]$ & 39 & 17 & $\CurP_5	   \Cpro \CurR_{5,5,2} \Cpro \CurD_{5,3}$   \\
		&&&& $[ 80, 8, 36]$ & 35 & 12 & $\CurR_{5,5,2} \Cpro \CurR_{5,4,2} \Cpro \CurR_{5,4,2}$ \\
		&&&& $[ 75, 8, 32]$ & 31 & 13 & $\CurR_{5,5,2} \Cpro \CurR_{5,5,2} \Cpro \CurD_{5,3}$   \\
		&&&& $[ 72, 8, 30]$ & 29 & 14 & $\CurP_5	   \Cpro \CurR_{5,4,2} \Cpro \CurD_{5,3}$   \\
		&&&& $[ 64, 8, 27]$ & 26 &  9 & $\CurR_{5,4,2} \Cpro \CurR_{5,4,2} \Cpro \CurR_{5,4,2}$ \\
		&&&& $[ 60, 8, 24]$ & 23 & 10 & $\CurR_{5,5,2} \Cpro \CurR_{5,4,2} \Cpro \CurD_{5,3}$   \\
		&&&& $[ 54, 8, 20]$ & 19 & 12 & $\CurP_5	   \Cpro \CurD_{5,3}   \Cpro \CurD_{5,3}$   \\
		&&&& $[ 48, 8, 18]$ & 17 &  7 & $\CurR_{5,4,2} \Cpro \CurR_{5,4,2} \Cpro \CurD_{5,3}$   \\
		&&&& $[ 45, 8, 16]$ & 15 &  8 & $\CurR_{5,5,2} \Cpro \CurD_{5,3}   \Cpro \CurD_{5,3}$   \\
		&&&& $[ 36, 8, 12]$ & 11 &  5 & $\CurR_{5,4,2} \Cpro \CurD_{5,3}   \Cpro \CurD_{5,3}$   \\
		&&&& $[ 27, 8,  8]$ &  7 &  3 & $\CurD_{5,3}   \Cpro \CurD_{5,3}   \Cpro \CurD_{5,3}$   \\
		\bottomrule{}
	\end{tabular}
	\caption{\centering{Parameters of constructions using $q = 5, k = 8$}}\label{table:q5k8}
\end{table}

\begin{table}[ht]
	\centering{}
	\begin{tabular}{rrrrlrrl}
		\toprule{}
		$q$ & $k$ & $r$ & $\ell$ & Code & $t$ & $a$ & Construction \\
		\midrule{}
		$5$ & $9$ & $3$ & $2$
		   & $[25, 9,  9]$ & 8 &  8 & $\CurR_{5,5,3} \Cpro \CurR_{5,5,3}$ \\
		&&&& $[20, 9,  6]$ & 5 &  5 & $\CurR_{5,5,3} \Cpro \CurD_{5,4}$   \\
		&&&& $[16, 9,  4]$ & 3 &  2 & $\CurD_{5,4}   \Cpro \CurD_{5,4}$   \\
		\midrule{}
		$5$ & $10$ & $5$ & $2$
		   & $[42, 10, 10]$ & 9 & 13 & $\CurP_5	   \Cpro \CurB_{5,8,3}^{*\{1\}}$ \\
		&&&& $[36, 10, 10]$ & 9 & 11 & $\CurP_5	   \Cpro \CurD_{5,6}$ \\
		&&&& $[35, 10,  8]$ & 7 &  9 & $\CurR_{5,5,2} \Cpro \CurB_{5,8,3}^{*\{1\}}$ \\
		&&&& $[30, 10,  8]$ & 7 &  7 & $\CurR_{5,5,2} \Cpro \CurD_{5,6}$ \\
		&&&& $[28, 10,  6]$ & 5 &  6 & $\CurR_{5,4,2} \Cpro \CurB_{5,8,3}^{*\{1\}}$ \\
		&&&& $[24, 10,  6]$ & 5 &  4 & $\CurR_{5,4,2} \Cpro \CurD_{5,6}$ \\
		&&&& $[21, 10,  4]$ & 3 &  4 & $\CurD_{5,3}   \Cpro \CurB_{5,8,3}^{*\{1\}}$ \\
		&&&& $[18, 10,  4]$ & 3 &  2 & $\CurD_{5,6}   \Cpro \CurD_{5,3}$ \\
		\bottomrule{}
	\end{tabular}
	\caption{\centering{Parameters of constructions using $q = 5, k \in \{9, 10\}$}}\label{table:q5k910}
\end{table}

\begin{figure}[ht]
	\centering{}
	\includegraphics[width=6cm]{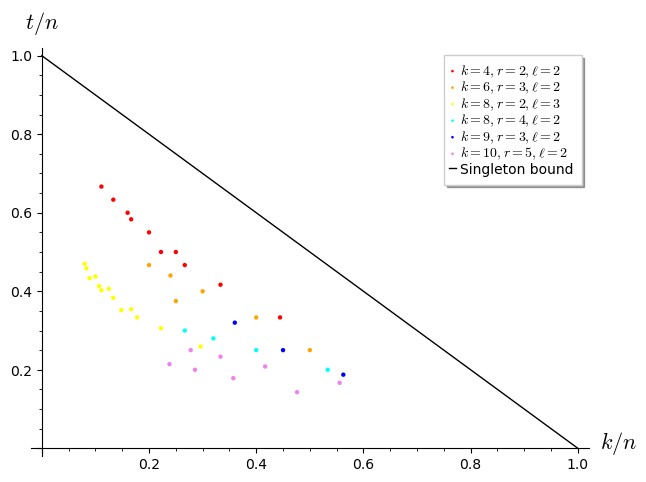}
	\caption{\centering{Product codes for $q = 5, k \leq 10$}}\label{chart:q5}
\end{figure}


\subsubsection*{Explicit examples}

\begin{example}
	Let $\CurP_3$ be the ${[4, 2, 3]}_3$ MDS code and let $\CurD_{3,3}$ be the ${[3, 2, 2]}_3$ MDS code and let $P$ and $D$ denote their respective generator matrices
	\begin{align*}
		P = \begin{bmatrix}
			1 & 0 & 1 & 2 \\
			0 & 1 & 1 & 1
		\end{bmatrix},
		&&
		D = \begin{bmatrix}
			1 & 0 & 1 \\
			0 & 1 & 1
		\end{bmatrix}.
	\end{align*}
	Since both are MDS with dimension $k = 2$, they both have the same locality $r = 2$, $\CurP_3$ has alternativity $\binom{3}{2} = 3$, and $\CurD_{3,3}$ has alternativity $\binom{2}{2} = 1$.
	Then the product code is a $(12, 4, 2, 5)$-SLRC with alternativity $4$ and generator matrix
	\begin{align*}
		P \otimes D = \begin{bmatrix}
			1 & 0 & 1 & 0 & 0 & 0 & 1 & 0 & 1 & 2 & 0 & 2 \\
			0 & 1 & 1 & 0 & 0 & 0 & 0 & 1 & 1 & 0 & 2 & 2 \\
			0 & 0 & 0 & 1 & 0 & 1 & 1 & 0 & 1 & 1 & 0 & 1 \\
			0 & 0 & 0 & 0 & 1 & 1 & 0 & 1 & 1 & 0 & 1 & 1
		\end{bmatrix}.
	\end{align*}
	We can represent this $q$-ary SLRC as in Figure~\ref{figure:(12,4,2,5)-SLRC]} where each node represents a code symbol.
	When a node is erased, it is able to be either recovered from the two nodes on the dotted line containing that, or from any pair of nodes with which it shares a solid line.
	In other words, the dotted lines represent the code $\CurD_{3,3}$, and the solid lines represent the code $\CurP_3$.

	\begin{figure}[ht]
		\begin{center}
			\begin{tikzpicture}
				\GraphInit[vstyle=Classic]
				\SetGraphUnit{1}
				\SetUpVertex[Math, MinSize=1pt]
	
				\def\s{1.0} 
	
				\foreach \x in {0,...,3}
				\foreach \y in {0,...,2}{%
					\Vertex[%
						x={\s*\x},
						y={\s*\y},
						NoLabel
					]{a\x\y}
				}
	
				\tikzstyle{EdgeStyle}=[solid, color=black]
				\Edges(a00, a01, a02)
				\Edges(a10, a11, a12)
				\Edges(a20, a21, a22)
				\Edges(a30, a31, a32)

				\tikzstyle{EdgeStyle}=[solid, color=green]
				\Edges[style={bend left }](a10, a20)
				\Edges[style={bend right}](a20, a30)
				\Edges[style={bend left }](a11, a21)
				\Edges[style={bend right}](a21, a31)
				\Edges[style={bend left }](a12, a22)
				\Edges[style={bend right}](a22, a32)

				\tikzstyle{EdgeStyle}=[solid, color=red]
				\Edges[style={bend right=45}](a00, a20)
				\Edges[style={bend left	}](a20, a30)
				\Edges[style={bend right=45}](a01, a21)
				\Edges[style={bend left	}](a21, a31)
				\Edges[style={bend right=45}](a02, a22)
				\Edges[style={bend left	}](a22, a32)

				\tikzstyle{EdgeStyle}=[solid, color=blue]
				\Edges[style={bend right=45}](a00, a10)
				\Edges[style={bend left	}](a10, a30)
				\Edges[style={bend right=45}](a01, a11)
				\Edges[style={bend left	}](a11, a31)
				\Edges[style={bend right=45}](a02, a12)
				\Edges[style={bend left	}](a12, a32)

				\tikzstyle{EdgeStyle}=[solid, color=orange]
				\Edges[style={bend left }](a00, a10)
				\Edges[style={bend right}](a10, a20)
				\Edges[style={bend left }](a01, a11)
				\Edges[style={bend right}](a11, a21)
				\Edges[style={bend left }](a02, a12)
				\Edges[style={bend right}](a12, a22)
	
			\end{tikzpicture}
		\end{center}
		\caption{\centering{A $(12,4,2,5)$-SLRC over $\F_3$.}}\label{figure:(12,4,2,5)-SLRC]}
	\end{figure}
\end{example}

\begin{example}
	Again, using the ${[4, 2, 3]}_3$ code $\CurP_3$ above, consider the product code $\CurP_3 \Cpro \CurP_3 \Cpro \CurP_3$.
	This is a $q$-ary $(64, 8, 2, 26)$-SLRC with alternativity $9$.
	Similar to before, we can represent this SLRC graphically: see Figure~\ref{figure:(64,8,2,26)-SLRC}.
	Each color (red, green, and blue) represents one of the three copies of $\CurP_3$ in the product where any node can be recovered using any two other nodes from the same line containing the erased node.

	\begin{figure}[ht]
		\begin{center}
			\begin{tikzpicture}[scale=0.75]
				\GraphInit[vstyle=Classic]
				\SetGraphUnit{1}
				\SetUpVertex[Math, MinSize=1pt]

				\def\s{2}	
				\def\zr{0.3} 
				\def\za{30}  

				\foreach \x in {0,...,3}
				\foreach \y in {0,...,3}
				\foreach \z in {0,...,3}{%
					\Vertex[%
						x={\s*\x + cos{\za}*\s*\zr*\z},
						y={\s*\y + sin{\za}*\s*\zr*\z},
						NoLabel
					]{a\x\y\z}
				}

				\tikzstyle{EdgeStyle}=[solid, color=red]
				\foreach \y in {0,...,3}{%
					\Edges(a0\y0, a1\y0, a2\y0, a3\y0)
				}
				\foreach \z in {0,...,3}{%
					\Edges(a03\z, a13\z, a23\z, a33\z)
				}

				\tikzstyle{EdgeStyle}=[dotted, color=red]
				\foreach \y in {0,...,2}
				\foreach \z in {1,...,3}{%
					\Edges(a0\y\z, a1\y\z, a2\y\z, a3\y\z)
				}

				\tikzstyle{EdgeStyle}=[solid, color=blue]
				\foreach \x in {0,...,3}{%
					\Edges(a\x00, a\x10, a\x20, a\x30)
				}
				\foreach \z in {0,...,3}{%
					\Edges(a30\z, a31\z, a32\z, a33\z)
				}

				\tikzstyle{EdgeStyle}=[dotted, color=blue]
				\foreach \x in {0,...,2}
				\foreach \z in {1,...,3}{%
					\Edges(a\x0\z, a\x1\z, a\x2\z, a\x3\z)
				}

				\tikzstyle{EdgeStyle}=[solid, color=green]
				\foreach \x in {0,...,3}{%
					\Edges(a\x30, a\x31, a\x32, a\x33)
				}
				\foreach \y in {0,...,3}{%
					\Edges(a3\y0, a3\y1, a3\y2, a3\y3)
				}

				\tikzstyle{EdgeStyle}=[dotted, color=green]
				\foreach \x in {0,...,2}
				\foreach \y in {0,...,2}{%
					\Edges(a\x\y0, a\x\y1, a\x\y2, a\x\y3)
				}
			\end{tikzpicture}
		\end{center}
		\caption{\centering{A ternary $(64, 8, 2, 26)$-SLRC.}}\label{figure:(64,8,2,26)-SLRC}
	\end{figure}
\end{example}

\begin{example}\label{ex:recovery}
	Consider the product $\CurC = \CurR_{5,5,2} \Cpro \CurR_{5,5,2}$ which is a $(25, 4, 2, 15)$-SLRC.\@
	In Figure~\ref{figure:iterations}, black vertices represent un-erased symbols, white represent erased symbols.
	In order to fully recover the entire codeword, recovery must proceed sequentially in several iterations using one copy of the $\CurR_{5,5,2}$ codes at a time.
	Red lines indicate the lines being recovered.

	\begin{figure}[ht]
		\centering{}
		\subfloat[Initial Pattern]{
			\centering{}
			\begin{tikzpicture}
				\GraphInit[vstyle=Classic]
				\SetGraphUnit{1}
				\SetUpVertex[Math, MinSize=1pt]

				\def\s{0.6}	

				\Vertex[x={\s*0}, y={ \s*0}, NoLabel]{a00}
				\Vertex[x={\s*1}, y={ \s*0}, NoLabel]{a10}
				\Vertex[x={\s*2}, y={ \s*0}, NoLabel]{a20}
				\Vertex[x={\s*3}, y={ \s*0}, NoLabel]{a30}
				\Vertex[x={\s*4}, y={ \s*0}, NoLabel]{a40}

				\Vertex[x={\s*0}, y={-\s*1}, NoLabel]{a01}
				\Vertex[x={\s*1}, y={-\s*1}, NoLabel]{a11}
				\Vertex[x={\s*2}, y={-\s*1}, NoLabel]{a21}
				\Vertex[x={\s*3}, y={-\s*1}, NoLabel]{a31}
				\Vertex[x={\s*4}, y={-\s*1}, NoLabel]{a41}

				\Vertex[x={\s*0}, y={-\s*2}, NoLabel]{a02}
				\Vertex[x={\s*1}, y={-\s*2}, NoLabel]{a12}
				\Vertex[x={\s*2}, y={-\s*2}, NoLabel]{a22}
				\Vertex[x={\s*3}, y={-\s*2}, NoLabel]{a32}
				\Vertex[x={\s*4}, y={-\s*2}, NoLabel]{a42}

				\Vertex[x={\s*0}, y={-\s*3}, NoLabel]{a03}
				\Vertex[x={\s*1}, y={-\s*3}, NoLabel]{a13}
				\Vertex[x={\s*2}, y={-\s*3}, NoLabel]{a23}
				\Vertex[x={\s*3}, y={-\s*3}, NoLabel]{a33}
				\Vertex[x={\s*4}, y={-\s*3}, NoLabel]{a43}

				\Vertex[x={\s*0}, y={-\s*4}, NoLabel]{a04}
				\Vertex[x={\s*1}, y={-\s*4}, NoLabel]{a14}
				\Vertex[x={\s*2}, y={-\s*4}, NoLabel]{a24}
				\Vertex[x={\s*3}, y={-\s*4}, NoLabel]{a34}
				\Vertex[x={\s*4}, y={-\s*4}, NoLabel]{a44}

				\AddVertexColor{black}{a00,a10,a21,a32}

				\tikzstyle{EdgeStyle}=[dotted]
				\foreach \y in {0,...,4}{%
					\Edges(a0\y, a1\y, a2\y, a3\y, a4\y)
				}
				\foreach \x in {0,...,4}{%
					\Edges(a\x0, a\x1, a\x2, a\x3, a\x4)
				}
				\end{tikzpicture}
		}
		\subfloat[First Iteration]{
			\centering{}
			\begin{tikzpicture}
				\GraphInit[vstyle=Classic]
				\SetGraphUnit{1}
				\SetUpVertex[Math, MinSize=1pt]

				\def\s{0.6}	

				\Vertex[x={\s*0}, y={ \s*0}, NoLabel]{a00}
				\Vertex[x={\s*1}, y={ \s*0}, NoLabel]{a10}
				\Vertex[x={\s*2}, y={ \s*0}, NoLabel]{a20}
				\Vertex[x={\s*3}, y={ \s*0}, NoLabel]{a30}
				\Vertex[x={\s*4}, y={ \s*0}, NoLabel]{a40}

				\Vertex[x={\s*0}, y={-\s*1}, NoLabel]{a01}
				\Vertex[x={\s*1}, y={-\s*1}, NoLabel]{a11}
				\Vertex[x={\s*2}, y={-\s*1}, NoLabel]{a21}
				\Vertex[x={\s*3}, y={-\s*1}, NoLabel]{a31}
				\Vertex[x={\s*4}, y={-\s*1}, NoLabel]{a41}

				\Vertex[x={\s*0}, y={-\s*2}, NoLabel]{a02}
				\Vertex[x={\s*1}, y={-\s*2}, NoLabel]{a12}
				\Vertex[x={\s*2}, y={-\s*2}, NoLabel]{a22}
				\Vertex[x={\s*3}, y={-\s*2}, NoLabel]{a32}
				\Vertex[x={\s*4}, y={-\s*2}, NoLabel]{a42}

				\Vertex[x={\s*0}, y={-\s*3}, NoLabel]{a03}
				\Vertex[x={\s*1}, y={-\s*3}, NoLabel]{a13}
				\Vertex[x={\s*2}, y={-\s*3}, NoLabel]{a23}
				\Vertex[x={\s*3}, y={-\s*3}, NoLabel]{a33}
				\Vertex[x={\s*4}, y={-\s*3}, NoLabel]{a43}

				\Vertex[x={\s*0}, y={-\s*4}, NoLabel]{a04}
				\Vertex[x={\s*1}, y={-\s*4}, NoLabel]{a14}
				\Vertex[x={\s*2}, y={-\s*4}, NoLabel]{a24}
				\Vertex[x={\s*3}, y={-\s*4}, NoLabel]{a34}
				\Vertex[x={\s*4}, y={-\s*4}, NoLabel]{a44}

				\AddVertexColor{black}{a00,a10,a21,a32}

				\tikzstyle{EdgeStyle}=[dotted]
				\foreach \y in {0,...,4}{%
					\Edges(a0\y, a1\y, a2\y, a3\y, a4\y)
				}
				\foreach \x in {0,...,4}{%
					\Edges(a\x0, a\x1, a\x2, a\x3, a\x4)
				}

				\tikzstyle{EdgeStyle}=[solid, color=red]
				\Edges(a00, a10, a20, a30, a40)
			\end{tikzpicture}
		}
		\subfloat[Second Iteration]{
			\centering{}
			\begin{tikzpicture}
				\GraphInit[vstyle=Classic]
				\SetGraphUnit{1}
				\SetUpVertex[Math, MinSize=1pt]

				\def\s{0.6}	

				\Vertex[x={\s*0}, y={ \s*0}, NoLabel]{a00}
				\Vertex[x={\s*1}, y={ \s*0}, NoLabel]{a10}
				\Vertex[x={\s*2}, y={ \s*0}, NoLabel]{a20}
				\Vertex[x={\s*3}, y={ \s*0}, NoLabel]{a30}
				\Vertex[x={\s*4}, y={ \s*0}, NoLabel]{a40}

				\Vertex[x={\s*0}, y={-\s*1}, NoLabel]{a01}
				\Vertex[x={\s*1}, y={-\s*1}, NoLabel]{a11}
				\Vertex[x={\s*2}, y={-\s*1}, NoLabel]{a21}
				\Vertex[x={\s*3}, y={-\s*1}, NoLabel]{a31}
				\Vertex[x={\s*4}, y={-\s*1}, NoLabel]{a41}

				\Vertex[x={\s*0}, y={-\s*2}, NoLabel]{a02}
				\Vertex[x={\s*1}, y={-\s*2}, NoLabel]{a12}
				\Vertex[x={\s*2}, y={-\s*2}, NoLabel]{a22}
				\Vertex[x={\s*3}, y={-\s*2}, NoLabel]{a32}
				\Vertex[x={\s*4}, y={-\s*2}, NoLabel]{a42}

				\Vertex[x={\s*0}, y={-\s*3}, NoLabel]{a03}
				\Vertex[x={\s*1}, y={-\s*3}, NoLabel]{a13}
				\Vertex[x={\s*2}, y={-\s*3}, NoLabel]{a23}
				\Vertex[x={\s*3}, y={-\s*3}, NoLabel]{a33}
				\Vertex[x={\s*4}, y={-\s*3}, NoLabel]{a43}

				\Vertex[x={\s*0}, y={-\s*4}, NoLabel]{a04}
				\Vertex[x={\s*1}, y={-\s*4}, NoLabel]{a14}
				\Vertex[x={\s*2}, y={-\s*4}, NoLabel]{a24}
				\Vertex[x={\s*3}, y={-\s*4}, NoLabel]{a34}
				\Vertex[x={\s*4}, y={-\s*4}, NoLabel]{a44}

				\AddVertexColor{black}{a00,a10,a20,a30,a40,a21,a32}

				\tikzstyle{EdgeStyle}=[dotted]
				\foreach \y in {0,...,4}{%
					\Edges(a0\y, a1\y, a2\y, a3\y, a4\y)
				}
				\foreach \x in {0,...,4}{%
					\Edges(a\x0, a\x1, a\x2, a\x3, a\x4)
				}

				\tikzstyle{EdgeStyle}=[solid, color=black]
				\Edges(a00, a10, a20, a30, a40)
				\tikzstyle{EdgeStyle}=[solid, color=red]
				\Edges(a20, a21, a22, a23, a24)
				\Edges(a30, a31, a32, a33, a34)
			\end{tikzpicture}
		}
		\subfloat[Third Iteration]{
			\centering{}
			\begin{tikzpicture}
				\GraphInit[vstyle=Classic]
				\SetGraphUnit{1}
				\SetUpVertex[Math, MinSize=1pt]

				\def\s{0.6}	

				\Vertex[x={\s*0}, y={ \s*0}, NoLabel]{a00}
				\Vertex[x={\s*1}, y={ \s*0}, NoLabel]{a10}
				\Vertex[x={\s*2}, y={ \s*0}, NoLabel]{a20}
				\Vertex[x={\s*3}, y={ \s*0}, NoLabel]{a30}
				\Vertex[x={\s*4}, y={ \s*0}, NoLabel]{a40}

				\Vertex[x={\s*0}, y={-\s*1}, NoLabel]{a01}
				\Vertex[x={\s*1}, y={-\s*1}, NoLabel]{a11}
				\Vertex[x={\s*2}, y={-\s*1}, NoLabel]{a21}
				\Vertex[x={\s*3}, y={-\s*1}, NoLabel]{a31}
				\Vertex[x={\s*4}, y={-\s*1}, NoLabel]{a41}

				\Vertex[x={\s*0}, y={-\s*2}, NoLabel]{a02}
				\Vertex[x={\s*1}, y={-\s*2}, NoLabel]{a12}
				\Vertex[x={\s*2}, y={-\s*2}, NoLabel]{a22}
				\Vertex[x={\s*3}, y={-\s*2}, NoLabel]{a32}
				\Vertex[x={\s*4}, y={-\s*2}, NoLabel]{a42}

				\Vertex[x={\s*0}, y={-\s*3}, NoLabel]{a03}
				\Vertex[x={\s*1}, y={-\s*3}, NoLabel]{a13}
				\Vertex[x={\s*2}, y={-\s*3}, NoLabel]{a23}
				\Vertex[x={\s*3}, y={-\s*3}, NoLabel]{a33}
				\Vertex[x={\s*4}, y={-\s*3}, NoLabel]{a43}

				\Vertex[x={\s*0}, y={-\s*4}, NoLabel]{a04}
				\Vertex[x={\s*1}, y={-\s*4}, NoLabel]{a14}
				\Vertex[x={\s*2}, y={-\s*4}, NoLabel]{a24}
				\Vertex[x={\s*3}, y={-\s*4}, NoLabel]{a34}
				\Vertex[x={\s*4}, y={-\s*4}, NoLabel]{a44}

				\AddVertexColor{black}{a00,a10,a20,a30,a40,a21,a22,a23,a24,a31,a32,a33,a34}

				\tikzstyle{EdgeStyle}=[dotted]
				\foreach \y in {0,...,4}{%
					\Edges(a0\y, a1\y, a2\y, a3\y, a4\y)
				}
				\foreach \x in {0,...,4}{%
					\Edges(a\x0, a\x1, a\x2, a\x3, a\x4)
				}

				\tikzstyle{EdgeStyle}=[solid, color=black]
				\Edges(a00, a10, a20, a30, a40)
				\Edges(a20, a21, a22, a23, a24)
				\Edges(a30, a31, a32, a33, a34)
				\tikzstyle{EdgeStyle}=[solid, color=red]
				\Edges(a01, a11, a21, a31, a41)
				\Edges(a02, a12, a22, a32, a42)
				\Edges(a03, a13, a23, a33, a43)
				\Edges(a04, a14, a24, a34, a44)
			\end{tikzpicture}
		}
		\caption{\centering{Iterative Recovery  Process}}\label{figure:iterations}
	\end{figure}
\end{example}


\subsubsection*{Erasure recovery in product codes}

For erasure correction, any erased symbol can be repaired using any of the codes in the product.
In general, consider an $(n_1 \cdots n_\ell, k_1 \cdots k_\ell, r, t)$-SLRC $\CurC = \CurC_1 \Cpro \cdots \Cpro \CurC_\ell$ where, for each $1 \leq j \leq \ell$, $\CurC_j$ is an $[n_j, k_j]$ code.
We can index the symbols of $\CurC$ by $(i_1, \ldots, i_\ell)$ where each $i_j \in [n_j]$.
Then, given some erased symbol $x$ at ${i_1, i_2, \ldots, i_\ell}$, for each $1 \leq j \leq \ell$ we can recover $x$ using any $r$-recovery set $R \subset \CurC_j$ where $i_j \in \supp(R)$
(see Example~\ref{ex:recovery} for an illustration).

\begin{lemma}\label{Lemma:maxpow}
	Let $a, c \in \R_+$ where $2 \leq a \leq c - 2$ and let $2 \leq \ell \in \N$.
	Then, for a fixed $c$, $a^\ell + {(c - a)}^\ell$ is maximized with $a = 2$ or with $a = c - 2$.
\end{lemma}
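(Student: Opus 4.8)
The plan is to regard $f(a) = a^\ell + {(c - a)}^\ell$ as a smooth real-valued function of the single real variable $a$ on the compact interval $[2, c - 2]$, with $c$ and $\ell$ held fixed, and to show that $f$ is convex there. Once convexity is established the conclusion is immediate: a convex function on a compact interval attains its maximum at one of the two endpoints, which here are exactly $a = 2$ and $a = c - 2$.

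To establish convexity I would differentiate twice. We have
\begin{align*}
	f'(a) &= \ell a^{\ell - 1} - \ell {(c - a)}^{\ell - 1}, \\
	f''(a) &= \ell (\ell - 1) a^{\ell - 2} + \ell (\ell - 1) {(c - a)}^{\ell - 2}.
\end{align*}
Since $\ell \geq 2$ and every point $a$ of the interval satisfies $a \geq 2 > 0$ and $c - a \geq 2 > 0$, both summands of $f''(a)$ are nonnegative; moreover the exponents $\ell - 2$ are nonnegative integers, so no boundary issue arises even in the limiting case $\ell = 2$. Hence $f''(a) \geq 0$ throughout $[2, c - 2]$, so $f$ is convex, and its maximum over the interval is attained at an endpoint.

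Alternatively, and without calculus, I could observe that $x \mapsto x^\ell$ is convex on $\R_+$ for $\ell \geq 2$, that $a \mapsto c - a$ is affine, and that a sum of convex functions precomposed with affine maps is again convex; this route avoids any appeal to differentiability and reaches the same endpoint conclusion.

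I expect no serious obstacle: the only points worth stating carefully are that $a$ ranges over the reals (so that the convexity/endpoint principle genuinely applies) and that the hypothesis $2 \leq a \leq c - 2$ keeps both bases strictly positive, which is precisely what guarantees $f'' \geq 0$. As a closing remark, the symmetry $f(a) = f(c - a)$ forces $f(2) = f(c - 2) = 2^\ell + {(c - 2)}^\ell$, so in fact both endpoints realize the same maximal value, consistent with the ``$a = 2$ or $a = c - 2$'' phrasing of the statement.
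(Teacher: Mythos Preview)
Your proposal is correct and takes essentially the same approach as the paper: both arguments differentiate $f(a)=a^\ell+(c-a)^\ell$ and use elementary calculus to conclude the maximum lies at an endpoint, with the paper reading off the sign of $f'$ directly while you pass to $f''\ge 0$ and invoke convexity. The symmetry observation $f(2)=f(c-2)$ also matches the paper's closing remark.
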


\begin{proof}
	Let $f(x) = x^\ell + {(c - x)}^\ell$.
	Then $f'(x) = \ell x^{\ell - 1} - \ell {(c - x)}^{\ell - 1}$ and $f'(c / 2) = 0$.
	For $2 \leq x < c / 2$, $f'(x) < 0$ and for $c / 2 < x \leq c - 2$, $f'(x) > 0$.
	Therefore $f(x)$ is maximized at $x = 2$ and $x = c - 2$ (and additionally $f(2) = f(c - 2)$).
\end{proof}

\begin{theorem}
    Let $\CurC$ be an ${[n, k, d]}_q$ code with $1\leq k < n$, let $\ell \geq 2$,
    and let $\mu$ denote the number of erasures.
	Then the $\ell$-fold product $\CurC^* := \CurC \Cpro \cdots \Cpro \CurC$ has the following erasure recovery behavior:
	\begin{enumerate}[label={(\alph*)}]
		\item If $\mu \leq \ell (d - 1)$, we can recover everything with only one copy of the ${[n, k, d]}_q$ code (in parallel).
		\item If $\ell (d - 1) < \mu < {d}^\ell$, we can fully sequentially recover, but we cannot necessarily recover in parallel.
		\item If ${d}^\ell \leq \mu \leq \min(n^\ell - k^\ell, \ell n^{\ell-1}(d-1))$, we can possibly recover depending on the erasure pattern. 
		\item If $\min(n^\ell - k^\ell, \ell n^{\ell-1}(d-1)) < \mu$, we definitely cannot fully recover.
	\end{enumerate}
\end{theorem}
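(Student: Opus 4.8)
The plan is to read $\CurC^*$ geometrically as the grid $[n]^\ell$, in which each of the $\ell n^{\ell-1}$ axis-parallel lines is a copy of $\CurC$ and can therefore be decoded whenever it carries at most $d-1$ erasures (as $\dist(\CurC)=d$). Here $N=n^\ell$, $K=k^\ell$, and by Lemma~\ref{lem:proddist} applied $\ell-1$ times $\dist(\CurC^*)=d^\ell$. Throughout, \emph{parallel} recovery means one simultaneous round of such line-decodings using only the originally-available symbols, while \emph{sequential} recovery means the peeling process: repeatedly decode a line carrying between $1$ and $d-1$ erasures and update the erasure set. The four items just describe when these succeed, so I would prove each by locating the relevant threshold.

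For (a) I would fix an erased coordinate $x$ and use that the $\ell$ lines through $x$ pairwise meet only in $x$; hence every erased point $\neq x$ lies on at most one of them and the number of erasures on their union equals $1+\sum_{j=1}^{\ell}(m_j-1)$, where $m_j$ is the number of erasures on the $j$-th line through $x$. If every $m_j\ge d$ this is at least $1+\ell(d-1)>\mu$, a contradiction; so some line through $x$ has at most $d-1$ erasures and recovers $x$ from available symbols. As this holds for every erased $x$, decoding in parallel all lines with at most $d-1$ erasures clears everything in a single round. (A more careful count shows one fixed direction already suffices, matching the phrase ``one copy of the code''; the star configuration below shows $\ell(d-1)$ is the exact threshold.)

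The engine for (b) and (c) is the lemma: \emph{any nonempty $E'\subseteq[n]^\ell$ meeting every axis-parallel line in $0$ or at least $d$ points satisfies $|E'|\ge d^\ell$.} I would prove this by slicing along the last coordinate---a direction-$\ell$ line meeting $E'$ does so in at least $d$ layers, so $E'$ meets at least $d$ layers, and within each the induction hypothesis gives at least $d^{\ell-1}$ points. Peeling stalls precisely at such a stuck set, so when $\mu<d^\ell$ no stuck set can appear and peeling recovers everything; this is the positive part of (b) (and also follows from Corollary~\ref{cor:prod}, since $\CurC$ is an SLRC with $t=d-1$, giving $T_\ell=d^\ell-1$). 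For the negative part, the star---a centre together with $d-1$ extra erasures on each of its $\ell$ lines---has size $\ell(d-1)+1<d^\ell$, yet every line through the centre carries $d$ erasures, so no single parallel round recovers the centre, while peeling the arms first (each arm-point has a clean line) and the centre last succeeds sequentially.

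For (c) and (d) I would exhibit obstructions. Erasing a box $S_1\times\cdots\times S_\ell$ with $|S_j|=d$---the support (Lemma~\ref{prop:tenvec}, iterated) of a tensor product of $\ell$ minimum-weight codewords of $\CurC$, hence the support of a nonzero codeword of $\CurC^*$---gives an $E$ of size $d^\ell$ that is unrecoverable even globally; moreover an equality analysis of the stuck-set lemma shows a size-$d^\ell$ pattern is unrecoverable exactly when it is such a box, so every non-box pattern of size $d^\ell$ is fully peelable. This pattern-dependence is precisely (c). Finally, for (d): if $\mu>n^\ell-k^\ell$ then fewer than $K$ symbols survive and no codeword is determined, so recovery is impossible even globally; and if $\mu>\ell n^{\ell-1}(d-1)$ then line-based recovery is impossible, because peeling performs at most one decoding per line, i.e.\ at most $\ell n^{\ell-1}$ decodings, each recovering at most $d-1$ symbols. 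Hence exceeding $\min(n^\ell-k^\ell,\,\ell n^{\ell-1}(d-1))$ defeats recovery by at least one of the two counts. I expect the main obstacle to be the stuck-set lemma---both its clean inductive proof and the verification that peeling stalls exactly on stuck sets---since the whole threshold structure of (b)--(d) rests on the value $d^\ell$ it produces; the single-direction sharpening in (a) is a secondary technicality.
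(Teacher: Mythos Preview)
Your proposal is correct and, for parts (b)–(c), more rigorous than the paper's own proof. The paper argues these parts largely by example (referring to explicit erasure patterns in figures) and relies implicitly on Corollary~\ref{cor:prod} for the positive sequential claim in (b); you instead prove an explicit stuck-set lemma by slicing induction, which cleanly gives the $d^\ell$ threshold and even the equality characterization you mention. That is a genuinely different and more self-contained route. One ingredient you omit that the paper does include: in (c) the paper checks, via Lemma~\ref{Lemma:maxpow} and a short inequality, that $d^\ell \le \min(n^\ell-k^\ell,\ \ell n^{\ell-1}(d-1))$, so that the interval in (c) is nonempty; you should add this so the regime is not vacuous.

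For (a), your per-point star argument establishes parallel recoverability, and you correctly flag the stronger ``one fixed direction suffices'' claim as a secondary point. The paper simply asserts that stronger claim (``one projection onto $\CurC^\ell$ has each component with at most $d-1$ erasures'') without a detailed proof, so neither you nor the paper fully nails it down; your weaker argument already suffices for the statement that parallel recovery is possible. Part (d) matches the paper's argument essentially verbatim.
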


\begin{proof}
	\begin{enumerate}[label={(\alph*)}]
		\item In this case there will be (at least) one projection from $\CurC^*$ onto the Cartesian product $\CurC^\ell$
			where each component has at most $d - 1$ erasures.
			Therefore, each of these components can recover their erasures in parallel.
		\item In this case, depending on the erasure pattern, parallel recovery is not always possible since all projections to $\CurC^\ell$ may contain at least one component with at least $d$ erasures.
			For example, consider a ${[5, 2, 4]}^2$ code whose $d^\ell - 1 = 15$ erasures are given in Figure~\ref{figure:recoverable-par}; we can recover each column (and therefore the full code) in parallel.
			For the lower bound, the same code with the pattern given by Figure~\ref{figure:recoverable-seq} has $\ell (d - 1) + 1 = 7$ erasures and is only sequentially recoverable.
		\item As an example, consider a ${[5, 2, 4]}_q$ code $\CurC$ and let $\CurC^* = \CurC \Cpro \CurC$.
			The erasure pattern given in Figure~\ref{figure:iterations}, has $\mu = n^\ell - {(n - d + 1)}^\ell = 21$ erasures but is fully (sequentially) recoverable.
			However, the erasure pattern given in Figure~\ref{figure:unrecoverable} having 16 erasures cannot be recovered.
			Additionally, since $k + d \leq n + 1$, $n \geq 3$, and $\ell \geq 2$, by Lemma~\ref{Lemma:maxpow} we have that
			\begin{align*}
				k^\ell + d^\ell
				&\leq 2^\ell + {(n - 1)}^\ell
				= 2^\ell + n^\ell - \sum_{i = 0}^{\ell - 1} \binom{\ell}{i} {(n - 1)}^i \\
				&\leq 2^\ell + n^\ell - \sum_{i = 0}^{\ell - 1} \binom{\ell}{i} 2^i \\
				&\leq 2^\ell + n^\ell - \ell 2^{\ell - 1} \\
				&\leq n^\ell.
			\end{align*}
			Similarly, we can also prove that $d^\ell \leq \ell n^{\ell-1}(d-1)$ for $d>1$. For this note that since $d,n,\ell\geq 2$ we have 
			$$\frac{d^\ell}{d-1} \leq \frac{n^\ell}{n-1}\iff d^\ell\leq \frac{n^\ell}{n-1}(d-1)$$
			and the latter is upper bounded by $\ell n^{\ell-1} (d-1)$ because $\ell \geq 2 \geq n/(n-1)$.
		\item First, if we have less than $n^\ell-(n^\ell-k^\ell)=k^\ell$ coordinates left, those un-erased coordinates cannot uniquely determine a codeword of the $k^\ell$-dimensional code $\CurC^*$. On the other hand, the maximal number of node recoveries we could do is $d-1$ per line, i.e., (since there are $\ell n^{\ell-1}$ axis-parallel lines) at most
			$$ \ell n^{\ell-1}(d-1)$$
			recoveries.
	\end{enumerate}
\end{proof}

\begin{figure}[ht]
    \centering{}
    \begin{tikzpicture}
        \GraphInit[vstyle=Classic]
        \SetGraphUnit{1}
        \SetUpVertex[Math, MinSize=1pt]

        \def\s{0.6}	

        \Vertex[x={\s*0}, y={ \s*0}, NoLabel]{a00}
        \Vertex[x={\s*1}, y={ \s*0}, NoLabel]{a10}
        \Vertex[x={\s*2}, y={ \s*0}, NoLabel]{a20}
        \Vertex[x={\s*3}, y={ \s*0}, NoLabel]{a30}
        \Vertex[x={\s*4}, y={ \s*0}, NoLabel]{a40}

        \Vertex[x={\s*0}, y={-\s*1}, NoLabel]{a01}
        \Vertex[x={\s*1}, y={-\s*1}, NoLabel]{a11}
        \Vertex[x={\s*2}, y={-\s*1}, NoLabel]{a21}
        \Vertex[x={\s*3}, y={-\s*1}, NoLabel]{a31}
        \Vertex[x={\s*4}, y={-\s*1}, NoLabel]{a41}

        \Vertex[x={\s*0}, y={-\s*2}, NoLabel]{a02}
        \Vertex[x={\s*1}, y={-\s*2}, NoLabel]{a12}
        \Vertex[x={\s*2}, y={-\s*2}, NoLabel]{a22}
        \Vertex[x={\s*3}, y={-\s*2}, NoLabel]{a32}
        \Vertex[x={\s*4}, y={-\s*2}, NoLabel]{a42}

        \Vertex[x={\s*0}, y={-\s*3}, NoLabel]{a03}
        \Vertex[x={\s*1}, y={-\s*3}, NoLabel]{a13}
        \Vertex[x={\s*2}, y={-\s*3}, NoLabel]{a23}
        \Vertex[x={\s*3}, y={-\s*3}, NoLabel]{a33}
        \Vertex[x={\s*4}, y={-\s*3}, NoLabel]{a43}

        \Vertex[x={\s*0}, y={-\s*4}, NoLabel]{a04}
        \Vertex[x={\s*1}, y={-\s*4}, NoLabel]{a14}
        \Vertex[x={\s*2}, y={-\s*4}, NoLabel]{a24}
        \Vertex[x={\s*3}, y={-\s*4}, NoLabel]{a34}
        \Vertex[x={\s*4}, y={-\s*4}, NoLabel]{a44}

        \AddVertexColor{black}{a00,a10,a20,a30,a40,a01,a11,a21,a31,a41}

        \tikzstyle{EdgeStyle}=[dotted]
        \foreach \y in {0,...,4}{%
            \Edges(a0\y, a1\y, a2\y, a3\y, a4\y)
        }
        \foreach \x in {0,...,4}{%
            \Edges(a\x0, a\x1, a\x2, a\x3, a\x4)
        }
    \end{tikzpicture}
    \caption{\centering{Parallel recoverable erasure pattern}}\label{figure:recoverable-par}
\end{figure}

\begin{figure}[ht]
    \centering{}
    \begin{tikzpicture}
        \GraphInit[vstyle=Classic]
        \SetGraphUnit{1}
        \SetUpVertex[Math, MinSize=1pt]

        \def\s{0.6}	

        \Vertex[x={\s*0}, y={ \s*0}, NoLabel]{a00}
        \Vertex[x={\s*1}, y={ \s*0}, NoLabel]{a10}
        \Vertex[x={\s*2}, y={ \s*0}, NoLabel]{a20}
        \Vertex[x={\s*3}, y={ \s*0}, NoLabel]{a30}
        \Vertex[x={\s*4}, y={ \s*0}, NoLabel]{a40}

        \Vertex[x={\s*0}, y={-\s*1}, NoLabel]{a01}
        \Vertex[x={\s*1}, y={-\s*1}, NoLabel]{a11}
        \Vertex[x={\s*2}, y={-\s*1}, NoLabel]{a21}
        \Vertex[x={\s*3}, y={-\s*1}, NoLabel]{a31}
        \Vertex[x={\s*4}, y={-\s*1}, NoLabel]{a41}

        \Vertex[x={\s*0}, y={-\s*2}, NoLabel]{a02}
        \Vertex[x={\s*1}, y={-\s*2}, NoLabel]{a12}
        \Vertex[x={\s*2}, y={-\s*2}, NoLabel]{a22}
        \Vertex[x={\s*3}, y={-\s*2}, NoLabel]{a32}
        \Vertex[x={\s*4}, y={-\s*2}, NoLabel]{a42}

        \Vertex[x={\s*0}, y={-\s*3}, NoLabel]{a03}
        \Vertex[x={\s*1}, y={-\s*3}, NoLabel]{a13}
        \Vertex[x={\s*2}, y={-\s*3}, NoLabel]{a23}
        \Vertex[x={\s*3}, y={-\s*3}, NoLabel]{a33}
        \Vertex[x={\s*4}, y={-\s*3}, NoLabel]{a43}

        \Vertex[x={\s*0}, y={-\s*4}, NoLabel]{a04}
        \Vertex[x={\s*1}, y={-\s*4}, NoLabel]{a14}
        \Vertex[x={\s*2}, y={-\s*4}, NoLabel]{a24}
        \Vertex[x={\s*3}, y={-\s*4}, NoLabel]{a34}
        \Vertex[x={\s*4}, y={-\s*4}, NoLabel]{a44}

        \AddVertexColor{black}{a00,a10,a20,a30,a40,a01,a11,a21,a31,a02,a12,a22,a32,a03,a13,a23,a33,a04}

        \tikzstyle{EdgeStyle}=[dotted]
        \foreach \y in {0,...,4}{%
            \Edges(a0\y, a1\y, a2\y, a3\y, a4\y)
        }
        \foreach \x in {0,...,4}{%
            \Edges(a\x0, a\x1, a\x2, a\x3, a\x4)
        }
    \end{tikzpicture}
    \caption{\centering{Sequential recoverable erasure pattern}}\label{figure:recoverable-seq}
\end{figure}
\begin{figure}[ht]
    \centering{}
    \begin{tikzpicture}
        \GraphInit[vstyle=Classic]
        \SetGraphUnit{1}
        \SetUpVertex[Math, MinSize=1pt]

        \def\s{0.6}	

        \Vertex[x={\s*0}, y={ \s*0}, NoLabel]{a00}
        \Vertex[x={\s*1}, y={ \s*0}, NoLabel]{a10}
        \Vertex[x={\s*2}, y={ \s*0}, NoLabel]{a20}
        \Vertex[x={\s*3}, y={ \s*0}, NoLabel]{a30}
        \Vertex[x={\s*4}, y={ \s*0}, NoLabel]{a40}

        \Vertex[x={\s*0}, y={-\s*1}, NoLabel]{a01}
        \Vertex[x={\s*1}, y={-\s*1}, NoLabel]{a11}
        \Vertex[x={\s*2}, y={-\s*1}, NoLabel]{a21}
        \Vertex[x={\s*3}, y={-\s*1}, NoLabel]{a31}
        \Vertex[x={\s*4}, y={-\s*1}, NoLabel]{a41}

        \Vertex[x={\s*0}, y={-\s*2}, NoLabel]{a02}
        \Vertex[x={\s*1}, y={-\s*2}, NoLabel]{a12}
        \Vertex[x={\s*2}, y={-\s*2}, NoLabel]{a22}
        \Vertex[x={\s*3}, y={-\s*2}, NoLabel]{a32}
        \Vertex[x={\s*4}, y={-\s*2}, NoLabel]{a42}

        \Vertex[x={\s*0}, y={-\s*3}, NoLabel]{a03}
        \Vertex[x={\s*1}, y={-\s*3}, NoLabel]{a13}
        \Vertex[x={\s*2}, y={-\s*3}, NoLabel]{a23}
        \Vertex[x={\s*3}, y={-\s*3}, NoLabel]{a33}
        \Vertex[x={\s*4}, y={-\s*3}, NoLabel]{a43}

        \Vertex[x={\s*0}, y={-\s*4}, NoLabel]{a04}
        \Vertex[x={\s*1}, y={-\s*4}, NoLabel]{a14}
        \Vertex[x={\s*2}, y={-\s*4}, NoLabel]{a24}
        \Vertex[x={\s*3}, y={-\s*4}, NoLabel]{a34}
        \Vertex[x={\s*4}, y={-\s*4}, NoLabel]{a44}

        \AddVertexColor{black}{a00,a10,a20,a30,a40,a01,a02,a03,a04}

        \tikzstyle{EdgeStyle}=[dotted]
        \foreach \y in {0,...,4}{%
            \Edges(a0\y, a1\y, a2\y, a3\y, a4\y)
        }
        \foreach \x in {0,...,4}{%
            \Edges(a\x0, a\x1, a\x2, a\x3, a\x4)
        }
    \end{tikzpicture}
    \caption{\centering{Unrecoverable erasure pattern}}\label{figure:unrecoverable}
\end{figure}


\section{Comparison to other LRCs}

In the general literature on LRCs, most work focuses on either parallel recovery, binary alphabets, or both.
Additionally, much of the work focuses on codes with a small number of erasures (e.g., $t = 2$). In contrast, our construction is very general and works for any field size $q$ and number of erasures $t$. 


\subsubsection*{Parallel recoverable codes}

We note that optimal constructions for LRCs exist, in particular for large $n$ relative to $r$ and $d$ (for example, see~\cite{GXY19}). In particular, whenever
\begin{align*}
	\frac{n}{r + 1} \geq \left(d - 2 - \left\lfloor \frac{d - 2}{r + 1} \right\rfloor \right) (3r + 2) + \left\lfloor \frac{d - 2}{r + 1} \right\rfloor + 1
\end{align*}
then without loss of generality one can assume that optimal LRCs have disjoint recovery sets and can hence recover all recoverable erasure patterns with parallel recovery~\cite{GXY19}. 
In the case of the parameters given previously using our construction, this bound is never met, meaning that sequential recovery can be better than parallel recovery.

Additionally, Tamo and Barg~\cite{TB14} gave optimal constructions using evaluations of polynomials for the case when $q \geq n$ and when there exists a ``good'' polynomial for the desired value of $r$. The results were later extended to codes with high availability—an analog of alternativity in the parallel recovery setting—in~\cite{GMNLW23} for some parameter sets using codes from evaluations of polynomials on paradoxical families of subsets.

Therefore, we focused on constructing codes filling in the (many) gaps outside of these categories, i.e., $q$-ary (for $2<q<n$), sequentially recoverable codes with large $d$ (or equivalently $t$) and small $n$ relative to $r$.


\subsubsection*{SLRCs derived from graph-based constructions}

In~\cite{PLBK19}, (in their words, ``rate-'' and ``distance-'') optimal SLRCs are constructed from both Tur\'{a}n and regular graphs.
These specifically target the $t = 2$ case and gives codes with rate $\tfrac{r}{r + 2}$.

Similarly, using Tur\'{a}n hypergraphs,~\cite{BPK16-1} extended these results to the $t = 3$ case with rate $\tfrac{r}{r + 3}$.

In~\cite{BKK19}, optimal binary SLRCs are built from Moore graphs---regular graphs whose girth is more than twice its diameter.
However, these corresponding graphs only exist for $t \in \{2, 3, 4, 5, 7, 11\}$ when $r \geq 2$.

Compared to them our construction is more general and works for any $q$ and $t$.


\subsubsection*{Previously known SLRCs derived from product codes}

In~\cite{SY16}, products of ${[n, n - 1, 2]}_2$ codes are presented; our method generalizes this to other codes over $q$-ary fields.

Following this, the authors of~\cite{SCYCH18} derived a new family of specific instances of binary product codes which give a rate of
\begin{align*}
	{\left( \sum_{s = 0}^{\ell} r^{-\abs{\supp_{\ell}(s)}} \right)}^{-1},
\end{align*}
where $r \geq 2$, $t \geq 1$, $\ell$ is any positive integer satisfying $t \leq 2^{\ell} - 1$, and $\supp_{\ell}(s)$ is the support of the $\ell$-digit binary representation of $s$.
However, this construction provides only sporadic parameter sets $(n, k, r, t)$; in particular, for $k \leq 10$, we have parameter sets for $k = 4$ and $r = 2$, $k = 8$ and $r = 2$, and $k = 9$ and $r = 3$ where in each case $t < k$.


\subsubsection*{Partial MDS codes}

We can also consider our codes in comparison to partial MDS (PMDS) codes. 
As mentioned in Section~\ref{sec:prelim} these PMDS codes are information-theoretically optimal LRCs but the full number of erasures cannot be recovered locally; i.e., in each block $i$, $t_i$ erasures can be recovered with locality $r$ but the remaining $s = mr - k$ erasures must use a full-rank minor of the generator matrix, i.e., a recovery set of size $k$. Our construction, on the other hand, provides locality for any erasure pattern.

Additionally, for most constructions, PMDS codes require a field size exponential in the number $s$ of global erasures and hence yield constructions over much larger fields than this work provides.


\subsection*{Rate comparison to binary SLRCs}

In the following table we exemplify achievable rates of our construction compared to the binary constructions of~\cite{WZL15} and~\cite{SCYCH18}. The rates of our construction are taken from the examples in Section~\ref{sec:examples}, for either $q=3$ or $q=5$. As one can see, our construction achieves a better (or equal) rate than the other two constructions, in particular, for large  values of $t$.

\begin{table}[ht]
	\centering{}
	\begin{tabular}{cccccccccc}
		\toprule{}
		rate $\backslash$ $t$ & 2 & 3 & 4 & 5 & 6 & 7 & 8 & 9 & 10 \\
		\midrule{}
		this work & 0.5 & 0.44 & 0.39 & 0.33 & 0.32 & 0.30 & 0.25 & 0.26 & 0.22 \\
		\midrule{}
		$\frac{r}{r + t}$\cite{WZL15} & 0.5 & 0.4 & 0.33 & 0.29 & 0.28 & 0.22 & 0.2 & 0.18 & 0.17 \\
		\midrule{}
		${(\frac{r}{r + 1})}^t$\cite{SCYCH18} & 0.44 & 0.30 & 0.20 & 0.13 & 0.09 & 0.06 & 0.04 & 0.03 & 0.02 \\
		\bottomrule{}
	\end{tabular}
	\caption{\centering Achievable rates for locality $r = 2$ and erasure recovery capacity $t \in \{1, 2, \ldots, 10\}$.}\label{tab:CompTable}
\end{table}


\section{Conclusion}
In this paper, we used the Kronecker product to derive a general code construction for  $q$-ary  sequential locally recoverable codes (SLRCs) which are capable of recovering a general number of $t \geq 2$ erasures, with small locality $r$. 
We derived new bounds on the maximum number of recoverable erasures and a minimum number of repair alternatively.
Our construction, when using BCH and/or MDS codes as the building blocks, has a code rate greater than  previously known constructions with the same property.

Furthermore, the structure in our codes coming from the code product can be used for recovering erasures in an algorithmic way.  

In future work we would like to explore more about the connection between SLRCs and  geometric objects or block designs, to construct more sequential locally recoverable codes with new parameters.


\printbibliography[heading=bibnumbered]{}

@article{BKK19,
	author		= "Balaji, S. B. and Kini, Ganesh R. and Kumar, P. Vijay",
	title		= "A tight rate bound and matching construction for locally recoverable codes with sequential recovery from any number of multiple erasures",
	journal		= "IEEE Transactions on Information Theory",
	volume		= "66",
	number		= "2",
	pages		= "1023--1052",
	year		= "2019",
	publisher	= "IEEE"
}

@article{BKVRSK18,
	author		= "Balaji, S. B. and Krishnan, M. Nikhil and Vajha, Myna and Ramkumar, Vinayak and Sasidharan, Birenjith and Kumar, P. Vijay",
	title		= "Erasure coding for distributed storage: An overview",
	journal		= "Science China Information Sciences",
	volume		= "61",
	number		= "10",
	pages		= "1--45",
	year		= "2018",
	publisher	= "Springer"
}

@article{BPK16-1,
	author		= "Balaji, S. B. and Prasanth, K. P. and Kumar, P. Vijay",
	title		= "Binary codes with locality for multiple erasures having short block length",
	pages		= "655--659",
	year		= "2016",
	journal		= "2016 IEEE International Symposium on Information Theory (ISIT)",
	DOI		= "10.1109/ISIT.2016.7541380"
}

@article{BPK16-2,
	author		= "Balaji, SB and Prasanth, KP and Kumar, P Vijay",
	title		= "Binary codes with locality for four erasures",
	journal		= "arXiv preprint arXiv:1607.02817",
	volume		= " ",
	number		= " ",
	pages		= " ",
	year		= "2016"
}

@article{GHH12,
	author		= "Gopalan, Parikshit and Huang, Cheng and Simitci, Huseyin and Yekhanin, Sergey",
	title		= "On the locality of codeword symbols", 
	journal		= "IEEE Transactions on Information theory",
	volume		= "58",
	number		= "11",
	pages		= "6925--6934",
	year		= "2012",
	publisher	= "IEEE"
}

@article{GXY19,
	title		= {How long can optimal locally repairable codes be?},
	author		= {Guruswami, Venkatesan and Xing, Chaoping and Yuan, Chen},
	journal		= {IEEE Transactions on Information Theory},
	volume		= {65},
	number		= {6},
	pages		= {3662--3670},
	year		= {2019},
	publisher	= {IEEE}
}

@article{HCL13,
	author		= "Huang, Cheng and Chen, Minghua and Li, Jin",
	title		= "Pyramid codes: Flexible schemes to trade space for access efficiency in reliable data storage systems",
	journal		= "ACM Transactions on Storage (TOS)",
	volume		= "9",
	number		= "1",
	pages		= "1--28",
	year		= "2013",
	publisher	= "ACM New York, NY, US"
}

@article{HN20,
	title		= {A complete classification of partial {MDS} (maximally recoverable) codes with one global parity},
	author		= {Horlemann-Trautmann, Anna-Lena and Neri, Alessandro},
	journal		= {Advances in Mathematics of Communications},
	volume		= {14},
	number		= {1},
	pages		= {69--88},
	year		= {2020},
	publisher	= {Advances in Mathematics of Communications}
}

@article{KPLK14,
	title		= {Codes with local regeneration and erasure correction},
	author		= {Kamath, Govinda M and Prakash, N and Lalitha, V and Kumar, P Vijay},
	journal		= {IEEE Transactions on information theory},
	volume		= {60},
	number		= {8},
	pages		= {4637--4660},
	year		= {2014},
	publisher	= {IEEE}
}

@article{MS77,
	author		= "MacWilliams, Florence Jessie and Sloane, Neil James Alexander",
	title		= "The Theory of Error Correcting Codes",
	journal		= "Elsevier",
	volume		= "16",
	year		= "1977"
}

@article{OD11,
	author		= "Oggier, Frederique and Datta, Anwitaman",
	title		= "Self-repairing homomorphic codes for distributed storage systems",
	journal		= "2011 Proceedings IEEE INFOCOM",
	pages		= "1215--1223",
	year		= "2011",
}

@article{PD14,
	author		= "Papailiopoulos, Dimitris S and Dimakis, Alexandros G",
	title		= "Locally repairable codes",
	journal		= "IEEE Transactions on Information Theory",
	volume		= "60",
	number		= "10",
	pages		= "5843--5855",
	year		= "2014",
	publisher	= "IEEE"
}

@inproceedings{PJHO13,
	author		= "Pamies-Juarez, Lluis and Hollmann, Henk DL and Oggier, Fr{\'e}d{\'e}rique",
	title		= "Locally repairable codes with multiple repair alternatives",
	booktitle	= "2013 IEEE international symposium on information theory",
	pages		= "892--896",
	year		= "2013",
}

@article{PLBK19,
	title		= {Codes with locality for two erasures},
	author		= {Prakash, N and Lalitha, V and Balaji, SB and Kumar, P Vijay},
	journal		= {IEEE Transactions on Information Theory},
	volume		= {65},
	number		= {12},
	pages		= {7771--7789},
	year		= {2019},
	publisher	= {IEEE}
}

@article{RMV15,
	author		= "Rawat, Ankit Singh and Mazumdar, Arya and Vishwanath, Sriram",
	title		= "Cooperative local repair in distributed storage",
	journal		= "EURASIP Journal on Advances in Signal Processing",
	volume		= "2015",
	number		= "1",
	pages		= "1--17",
	year		= "2015",
	publisher	= "Springer"
}

@article{SCYCH18,
	author		= "Song, Wentu and Cai, Kai and Yuen, Chau and Cai, Kui and Han, Guangyue",
	title		= "On Sequential Locally Repairable Codes",
	volume		= "64",
	number		= "5",
	pages		= "3513--3527",
	year		= "2018",
	journal		= "IEEE Transactions on Information Theory",
	DOI		= "10.1109/TIT.2017.2711611"
}

@article{SY15,
	author		= "Wentu Song and Chau Yuen",
	title		= "Locally Repairable Codes with Functional Repair and Multiple Erasure Tolerance",
	journal		= "arXiv",
	volume		= "abs/1507.02796",
	year		= "2015",
	eprint		= "1507.02796",
	note		= "\url{http://arxiv.org/abs/1507.02796}"
}

@inproceedings{SY16,
	title		= {Binary locally repairable codes-sequential repair for multiple erasures},
	author		= {Song, Wentu and Yuen, Chau},
	booktitle	= {2016 IEEE Global Communications Conference (GLOBECOM)},
	pages		= {1--6},
	year		= {2016},
}

@article{TB14,
	title		= {A family of optimal locally recoverable codes},
	author		= {Tamo, Itzhak and Barg, Alexander},
	journal		= {IEEE Transactions on Information Theory},
	volume		= {60},
	number		= {8},
	pages		= {4661--4676},
	year		= {2014},
	publisher	= {IEEE}
}

@article{TBF16,
	author		= "Tamo, Itzhak and Barg, Alexander and Frolov, Alexey",
	title		= "Bounds on the parameters of locally recoverable codes",
	volume		= "62",
	number		= "6",
	pages		= "3070--3083",
	year		= "2016",
	publisher	= "IEEE"
}

@inproceedings{WZL15,
	author		= "Wang, Anyu and Zhang, Zhifang and Liu, Mulan",
	title		= "Achieving arbitrary locality and availability in binary codes",
	booktitle	= "2015 IEEE International Symposium on Information Theory (ISIT)",
	pages		= "1866--1870",
	year		= "2015",
}

@book{HJ94,
  title={Topics in matrix analysis},
  author={Horn, Roger A and Johnson, Charles R},
  year={1994},
  publisher={Cambridge university press}
}

@INPROCEEDINGS{GMNLW23,
  author={Garrison, Clifton and Micheli, Giacomo and Nott, Logan and Lavorante, Vincenzo Pallozzi and Waitkevich, Phillip},
  booktitle={2023 IEEE International Symposium on Information Theory (ISIT)}, 
  title={On a Class of Optimal Locally Recoverable Codes with Availability}, 
  year={2023},
  volume={},
  number={},
  pages={2021-2026},
  keywords={Structural rings;Codes;Linear codes;Galois fields;Arithmetic},
  doi={10.1109/ISIT54713.2023.10206840}
}


\begin{appendices} 

\section{Geometric Proof of Corollary~\ref{cor:prod}}\label{app:geoprod}

\begin{proof}
	The length and dimension of the code follow the code's product definition. It remains to show the locality, the number of recoverable erasures, and the alternativity.
	
	We proceed by induction on $\ell$.
	For $\ell = 1$ this is trivial.
	Assume true up to $\ell - 1$ and let $\CurC = {\CurC}_1 \Cpro \cdots \Cpro {\CurC}_{\ell - 1}$, by our assumption, be an $(N_{\ell - 1}, K_{\ell - 1}, r, T_{\ell - 1})$-SLRC\@.
	We then consider the product $\CurC \Cpro {\CurC}_{\ell}$.
	
	Suppose $A$, $B$, and $C = A \otimes B$ respectively are generator matrices of $\CurC$, ${\CurC}_{\ell}$, and $\CurC \Cpro {\CurC}_{\ell}$ where $a_{ij}$ and $b_{ij}$ respectively represent the entries of $A$ and $B$.
	Let
	\begin{align*}
		m = (m_{11}, m_{12}, \ldots, m_{1k_\ell}, m_{21}, \ldots, m_{K_{\ell - 1}k_\ell}) \in \mathbb{F}_q^{K_\ell}
	\end{align*}
	and $z = mC \in \CurC \Cpro {\CurC}_{\ell}$ be the corresponding codeword.
	We represent $z$ as
	\begin{align*}
		z = (z_{11}, z_{12}, \ldots, z_{1n_\ell}, z_{21}, \ldots, z_{N_{\ell - 1}n_\ell}) \in \mathbb{F}_q^{N_\ell}
	\end{align*}
	and, by expanding the product $m (A \otimes B)$, we can see that
	\begin{align*}
		z_{ij} = \sum_{g = 1}^{K_{\ell - 1}} \sum_{h = 1}^{k_\ell} m_{gh} a_{gi} b_{hj}.
	\end{align*}
	Finally, let $m_j = (m_{1j}, \ldots, m_{K_{\ell - 1}})$, $x_j = m_j A$, and let ${(x_j)}_i$ represent the $i^\text{th}$ symbol of $x_j$.
	
	We now show that for each symbol $z_{ij}$ of $z$ and for each possible recovery set for a symbol in $\CurC$, there is a corresponding recovery set in the product code.
	Let $R$ be a $(r, \CurC)$-recovery set for $i$ with corresponding coefficients $c_s$ for $s \in R$.
	Then
	\begin{align*}
		z_{ij}
		&= \sum_{h = 1}^{k_\ell} b_{hj} \sum_{g = 1}^{K_{\ell - 1}} m_{gh} a_{gi}.
	\end{align*}
	Note that
	\begin{align*}
		\sum_{g = 1}^{K_{\ell - 1}} m_{gh} a_{gi}
		= {(x_h)}_i
		= \sum_{s \in R} c_s {(x_h)}_s.
	\end{align*}
	It then follows that
	\begin{align*}
		z_{ij}
		&= \sum_{h = 1}^{k_\ell} b_{hj} \sum_{s \in R} c_s {(x_h)}_s \\
		&= \sum_{s \in R} c_s \sum_{h = 1}^{k_\ell} b_{hj} {(x_h)}_s \\
		&= \sum_{s \in R} c_s \sum_{h = 1}^{k_\ell} b_{hj} \sum_{g = 1}^{K_{\ell - 1}} m_{gh} a_{gs} \\
		&= \sum_{s \in R} c_s z_{sj}.
	\end{align*}
	
	A similar argument holds using recovery sets coming from ${\CurC}_{\ell}$.
	Thus, we can see that the alternativity of the product is at least the sum of the alternativities of $\CurC$ and $\CurC_\ell$.

	To calculate the total number of recoverable erasures, we consider the product $\CurC \Cpro {\CurC}_{\ell}$ as consisting of $n_{\ell}$ copies of $\CurC$.
	If we have
	\begin{align*}
		T_{\ell}
		= (T_{\ell - 1} + 1)(t_{\ell} + 1) - 1
		= t_{\ell} (T_{\ell - 1} + 1) + T_{\ell - 1}
	\end{align*}
	erasures, under the worst case, we observe $T_{\ell - 1} + 1$ failed nodes on $t_{\ell}$ copies of $\CurC$ making these locally unrecoverable.
	However, this leaves only $T_{\ell - 1}$ erasures meaning the other $n_{\ell} - t_{\ell}$ copies of $\CurC$ can be fully recovered.
	Since all remaining unrecovered erasures lie in a copy of ${\CurC}_{\ell}$ with at most $t_{\ell}$ erasures, these also can be recovered.
	For an illustration of this see Figure~\ref{figure:product-recovery}. 
	
	\begin{figure}[ht]
		\begin{center}
			\begin{tikzpicture}[scale=0.75]
				\GraphInit[vstyle=Classic]
				\SetGraphUnit{1}
				\SetUpVertex[Math, MinSize=1pt]
	
				\def\s{1.5}	
				\def\xs{1.5}   
				\def\zr{0.3} 
				\def\za{30}  
	
				\foreach \x in {0,...,3}
				\foreach \y in {0,...,3}
				\foreach \z in {0,...,3}{%
					\Vertex[%
						x={\s*\xs*\x + cos{\za}*\s*\zr*\z},
						y={\s*\y + sin{\za}*\s*\zr*\z},
						NoLabel
					]{a\x\y\z}
				}
	
				\tikzstyle{EdgeStyle}=[solid, color=green]
				\foreach \x in {0, 1}{%
					\Edges(a\x00, a\x10, a\x20, a\x30)
					\Edges(a\x01, a\x11, a\x21, a\x31)
					\Edges(a\x02, a\x12, a\x22, a\x32)
					\Edges(a\x03, a\x13, a\x23, a\x33)
					\Edges(a\x00, a\x01, a\x02, a\x03)
					\Edges(a\x10, a\x11, a\x12, a\x13)
					\Edges(a\x20, a\x21, a\x22, a\x23)
					\Edges(a\x30, a\x31, a\x32, a\x33)
				}
	
				\tikzstyle{EdgeStyle}=[solid, color=red]
				\foreach \x in {2, 3}{%
					\Edges(a\x00, a\x10, a\x20, a\x30)
					\Edges(a\x01, a\x11, a\x21, a\x31)
					\Edges(a\x02, a\x12, a\x22, a\x32)
					\Edges(a\x03, a\x13, a\x23, a\x33)
					\Edges(a\x00, a\x01, a\x02, a\x03)
					\Edges(a\x10, a\x11, a\x12, a\x13)
					\Edges(a\x20, a\x21, a\x22, a\x23)
					\Edges(a\x30, a\x31, a\x32, a\x33)
				}
	
				\tikzstyle{EdgeStyle}=[dotted, color=black]
				\foreach \y in {0,1,2,3}
				\foreach \z in {0,1,2,3}{%
					\Edges(a0\y\z, a1\y\z, a2\y\z, a3\y\z)
				}
	
				\foreach \x in {0,1,2,3}{%
					\node(b\x0) at (\s*\xs*\x, 0){};
					\node(b\x1) at (\s*\xs*\x + cos{\za}*\s*\zr*3, 0){};
				}
	
				\draw[%
					thick,
					decoration={%
						brace,
						raise=0.5cm
					},
					decorate
				](a033) -- (a333);
				\node[above of=a033,xshift=3.5cm,align=center]{%
					Recovery sets of $ {\CurC_\ell}$
				};
				\draw[%
					thick,
					decoration={%
						brace,
						mirror,
						raise=0.5cm
					},
					decorate
				](b00) -- (b11);
				\node[below of=b00,xshift=1.75cm,align=center]{%
					Locally recoverable\\copies of $\CurC$
				};
	
				\draw[%
					thick,
					decoration={%
						brace,
						mirror,
						raise=0.5cm
					},
					decorate
				](b20) -- (b31);
				\node[below of=b20,xshift=1.75cm,align=center]{%
					Locally unrecoverable\\copies of $\CurC$
				}; 
			\end{tikzpicture}
		\end{center}
		\caption{\centering{Sequential recovery with the product code}}\label{figure:product-recovery}
	\end{figure}
\end{proof}

\end{appendices}


\end{document}